\newtheorem{theorem}{Theorem}[]
\title{FlowMoE: A Scalable Pipeline Scheduling Framework for Distributed Mixture-of-Experts Training}
\author{
	\textbf{Yunqi Gao$^{1}$, Bing Hu$^{1}$\thanks{Corresponding author}~, Mahdi Boloursaz Mashhadi$^{2}$, A-Long Jin$^{3}$, Yanfeng Zhang$^{4}$,}\\
	\textbf{Pei Xiao$^{2}$, Rahim Tafazolli$^{2}$, M{\'e}rouane~Debbah$^{5}$} \\
	$^1$School of Information Science and Electronic Engineering, Zhejiang University \\
	$^2$5GIC \& 6GIC, Institute for Communication Systems (ICS), University of Surrey \\
	$^3$School of Advanced Technology, Xi'an Jiaotong-Liverpool University \\
	$^4$School of Computer Science and Engineering, Northeastern University \\
	$^5$KU 6G Research Center, Department of Computer and Information Engineering, Khalifa University \\
	\{gaoyunqi1999, binghu\}@zju.edu.cn, \\
	\{m.boloursazmashhadi, p.xiao, r.tafazolli\}@surrey.ac.uk, \\
	along.jin@xjtlu.edu.cn, 
	zhangyf@mail.neu.edu.cn,
	merouane.debbah@ku.ac.ae
}
\begin{document}

\maketitle

\vspace{-0.5cm}
\begin{abstract}
	\vspace{-0.3cm}
	The parameter size of modern large language models (LLMs) can be scaled up via the sparsely-activated Mixture-of-Experts (MoE) technique to avoid excessive increase of the computational costs. To further improve training efficiency, pipelining computation and communication has become a promising solution for distributed MoE training. However, existing work primarily focuses on scheduling tasks within the MoE layer, such as expert computing and all-to-all (A2A) communication, while neglecting other key operations including multi-head attention (MHA) computing, gating, and all-reduce communication. In this paper, we propose FlowMoE, a scalable framework for scheduling multi-type task pipelines. First, FlowMoE constructs a unified pipeline to consistently scheduling MHA computing, gating, expert computing, and A2A communication. Second, FlowMoE introduces a tensor chunk-based priority scheduling mechanism to overlap the all-reduce communication with all computing tasks. We implement FlowMoE as an adaptive and generic framework atop PyTorch. Extensive experiments with 675 typical MoE layers and four real-world MoE models across two GPU clusters demonstrate that our proposed FlowMoE framework outperforms state-of-the-art MoE training frameworks, reducing training time by 13\%-57\%, energy consumption by 10\%-39\%, and memory usage by 7\%-32\%. FlowMoE’s code is available at \url{https://github.com/ZJU-CNLAB/FlowMoE}.
\end{abstract}

\vspace{-0.2cm}
\section{Introduction}
\vspace{-0.2cm}
Large language models (LLMs) have demonstrated remarkable performance on natural language processing (NLP) tasks as model sizes increase (e.g., GPT-3 \citep{brown2020language} with 175 billion parameters, LLaMA3.1 \citep{dubey2024llama} with 405 billion parameters, and DeepSeek-R1 \citep{guo2025deepseek} with 671 billion parameters). However, parameter scaling causes a linear increase in the computational cost. Currently, sparsely-activated LLMs with Mixture-of-Experts layers (MoE models) have become extremely popular \citep{jacobs1991adaptive}, where the MoE layer replaces the standard feed-forward layer in traditional transformer blocks. In MoE, a gating function selects a small subset of dense layers, known as experts, to be activated for the input sample tokens. This dynamic selection enables only a few experts to participate in computation during each iteration \citep{lepikhin2020gshard, shazeer2017outrageously}. Therefore, MoE techniques can scale the model size with limited increase in computation \citep{lepikhin2020gshard, ma2022bagualu, riquelme2021scaling}, e.g., Google's Switch Transformer expands parameters from a few billion to 1.5 trillion through 15 MoE layers with 2048 experts each \citep{fedus2022switch}. However, training MoE models on large-scale GPU clusters still suffers from serious scalability bottlenecks \citep{shi2024schemoe, he2022fastermoe, hwang2023tutel}.

Recently, expert parallelism has been proposed to train MoE models in a distributed fashion by placing different experts on multiple workers since a single worker (e.g., GPU) cannot hold a complete MoE model \citep{lepikhin2020gshard}. In each iteration, input tokens need to be transferred to particular workers, which depends on an all-to-all (A2A) communication operation (\textit{dispatch}), and the outputs from expert computing on different workers need to be collected via another A2A operation (\textit{combine}) \citep{bruck1994efficient}. In addition, the parameters of the remaining parts of the model including the multi-head attention (MHA) layer and the gating function are replicated across all workers to perform data parallelism mode \citep{dean2012large, gao2023us, shi2019mg, jia2018highly, DBLP:conf/iclr/YouLRHKBSDKH20}. In particular, the parameters of the MHA layer and the gating function need to be synchronized among all workers using the all-reduce communication after backward propagation.

\begin{table}[!t]
	\renewcommand{\arraystretch}{1.1}
	\caption{Time for different tasks of each iteration in training four MoE models on a 16-GPU (NVIDIA RTX3090) cluster (with 100Gbps bandwidth) running vanilla expert parallelism \citep{he2021fastmoe}. `MHA + Gating Time' indicates the computing time of the MHA layer and the gating function. `All-Reduce Time' indicates the time of all-reduce communication. `Ratio' indicates the ratio of the sum of `MHA + Gating Time' and `All-Reduce Time' over the total time per iteration.}
	\label{Table 1}
	\centering
	
	\begin{tabular}{p{76pt} >{\centering\arraybackslash}p{100pt} >{\centering\arraybackslash}p{70pt} >{\centering\arraybackslash}p{58pt} >{\centering\arraybackslash}p{20pt}}
		\hline
		Model & \makecell{MHA + Gating Time} & \makecell{All-Reduce Time} & \makecell{Iteration Time} & Ratio \\
		\hline
		GPT2-Tiny-MoE & 23.5ms & 32.6ms & 169.5ms & 33.1\%\\
		BERT-Large-MoE & 61.9ms & 98.3ms & 537.8ms & 29.8\%\\
		LLaMA2-MoE & 308.4ms & 368.8ms & 1987.7ms & 34.2\%\\
		DeepSeek-V2-S & 870.2ms & 1247.8ms & 5843.3ms & 36.1\%\\
		\hline
	\end{tabular}
	\vspace{-0.5cm}
\end{table}
Since the computing and communication tasks do not occupy the same hardware resources, pipelining of computing and communication becomes one of the most efficient methods for accelerating expert parallelism. Existing works (e.g., ScheMoE \citep{shi2024schemoe}, Tutel \citep{hwang2023tutel}, FasterMoE \citep{he2022fastermoe}, Lina \citep{li2023accelerating}, PipeMoE \citep{shi2023pipemoe}, Comet \citep{zhang2025comet}) pipeline expert computing tasks and A2A communication tasks to hide the communication and reduce the training time of the MoE model by splitting the input data of the MoE layer into micro-batches. However, they focus only on overlapping tasks in the MoE layer and neglect MHA layer computing, gating, and all-reduce communication. We conducted experiments on a 16-GPU cluster and the results are shown in Table \ref{Table 1} (see Table \ref{Table 3} for model configurations). It can be observed that the MHA layer computing, gating, and all-reduce communication constitute 30\%-40\% of the iteration time. Therefore, designing a pipeline scheduling method that considers all major tasks of the transformer block with MoE layer can maximize the overlap between computing and communication, thus improving the scaling efficiency of distributed MoE training.

In this paper, we propose FlowMoE, a scalable pipeline scheduling framework for multi-type tasks in distributed MoE training. We make the following main technical contributions: (1) We construct a unified pipeline that consistently schedules MHA computing, gating, expert computing, and A2A communication. (2) We design a priority-based scheduling mechanism for communication tasks using all-reduce tensor chunks to further overlap the all-reduce communication with computing tasks, and we leverage Bayesian optimization (BO) to automatically tune the partition size of all-reduce chunks. (3) We implement the FlowMoE framework on PyTorch \citep{paszke2019pytorch} and open-source the code. Extensive experiments on two GPU clusters using manually customized MoE layers and real-world transformer-based MoE models show that FlowMoE achieves better training performance than state-of-the-art MoE frameworks (including ScheMoE \citep{shi2024schemoe}, FSMoE \citep{FSMoE2025Pan}, Tutel \citep{hwang2023tutel} and FasterMoE \citep{he2022fastermoe}) and vanilla expert parallelism \citep{he2021fastmoe}. Specifically, FlowMoE outperforms ScheMoE by 26\% average time efficiency in training 675 MoE layers with different configurations. In comparison with state-of-the-art frameworks in training real-world popular MoE models, FlowMoE achieves 1.13$\times$-1.82$\times$ speedup, and reduces energy consumption by 10\%-41\% and memory usage by 7\%-32\% during each training iteration. The main differences between FlowMoE and the key literature is shown in Appendix \ref{Main differences}.

\vspace{-0.2cm}
\section{Background and Challenges}
\vspace{-0.2cm}
\label{Section 2}
%
\subsection{Transformer Block with MoE Layer}
\vspace{-0.2cm}
Fig. \ref{Fig. 1a} illustrates a typical transformer structure with MoE layers, where the transformer block usually consists of an MHA layer and an MoE layer. For the input tensor $ I\in \mathbb{R}^{B\times N\times M} $, the MHA layer utilizes the Query, Key, and Value matrices $ W^Q, W^K, W^V\in \mathbb{R}^{M\times M} $ to calculate the attention data of each token and obtains the output tensor $ I'\in \mathbb{R}^{B\times N\times M} $ through a linear transformation matrix $ W^O\in \mathbb{R}^{M\times M} $, where $ B $ denotes the number of samples per GPU (or mini-batch size) in one iteration, $ N $ denotes the number of tokens per sample, and $ M $ denotes the embedding size of a token. The MoE layer includes a \textit{gating function} and multiple \textit{experts}. The gating function $ G $ is a small learnable neural network followed by a softmax layer and is used to select the activated experts for the input tokens. Typically, only top-$k$ experts are chosen to process a token \citep{fedus2022switch}. The output tensor $ G(I')\in \mathbb{R}^{E\times C\times M} $ of the gating function will be dispatched to the corresponding expert (each expert receives a tensor of shape $ C\times M $), where $ E $ denotes the total number of experts per MoE layer and $ C $ denotes the maximum number of tokens assigned to one expert. $ C $ can be calculated by $ f\times k\times B\times N/E $,
where $ f $ is the capacity factor that determines the maximum number of tokens assigned to each expert and is used to adjust $ C $ \citep{lepikhin2020gshard}. An expert is usually a small neural network with two structurally symmetric feed-forward layers (the first and the second layers are of sizes $ M\times H $ and $ H\times M $, respectively), where $ H $ denotes the hidden size of the feed-forward layer, and each expert is considered to have its own domain of expertise. After expert computing, the outputs from all the experts are combined into a tensor with a shape of $ B\times N\times M $ as the input to the next transformer block.
\begin{figure*}[!t]
	\captionsetup[subfigure]{justification=centering}
	\centering
	\begin{minipage}{\linewidth}
		\centering
		\begin{subfigure}{0.40\textwidth}
			\includegraphics[width=\linewidth]{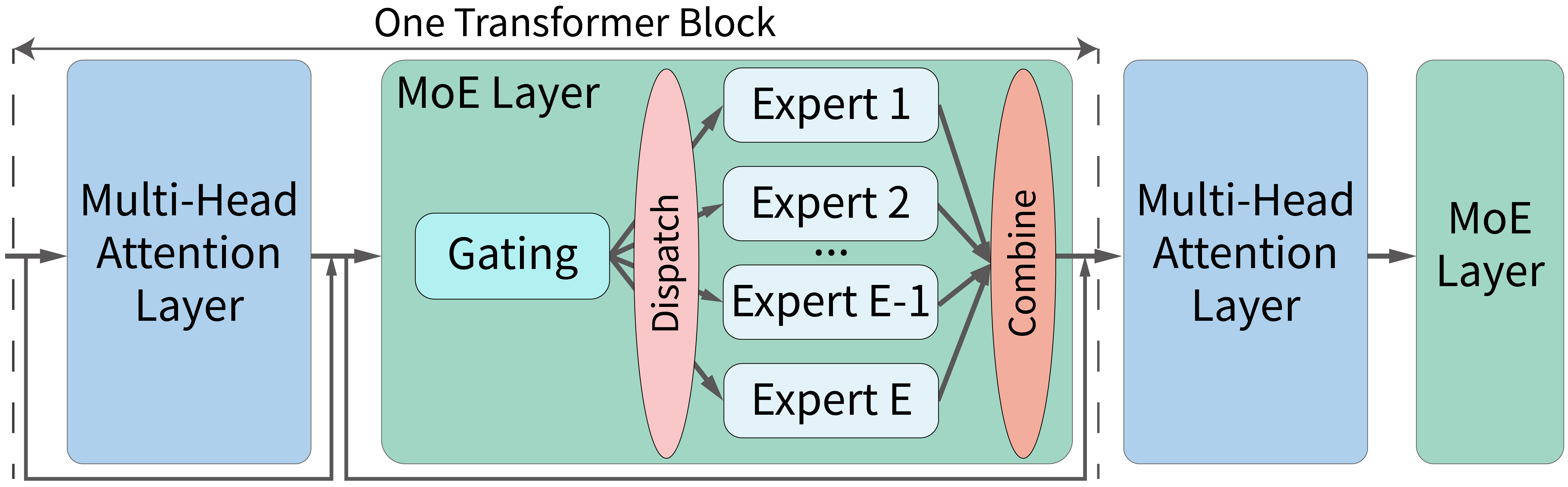}
			\caption{A transformer structure with MoE layers.}
			\label{Fig. 1a}
		\end{subfigure}
		\centering
		\begin{subfigure}{0.40\textwidth}
			\includegraphics[width=\linewidth]{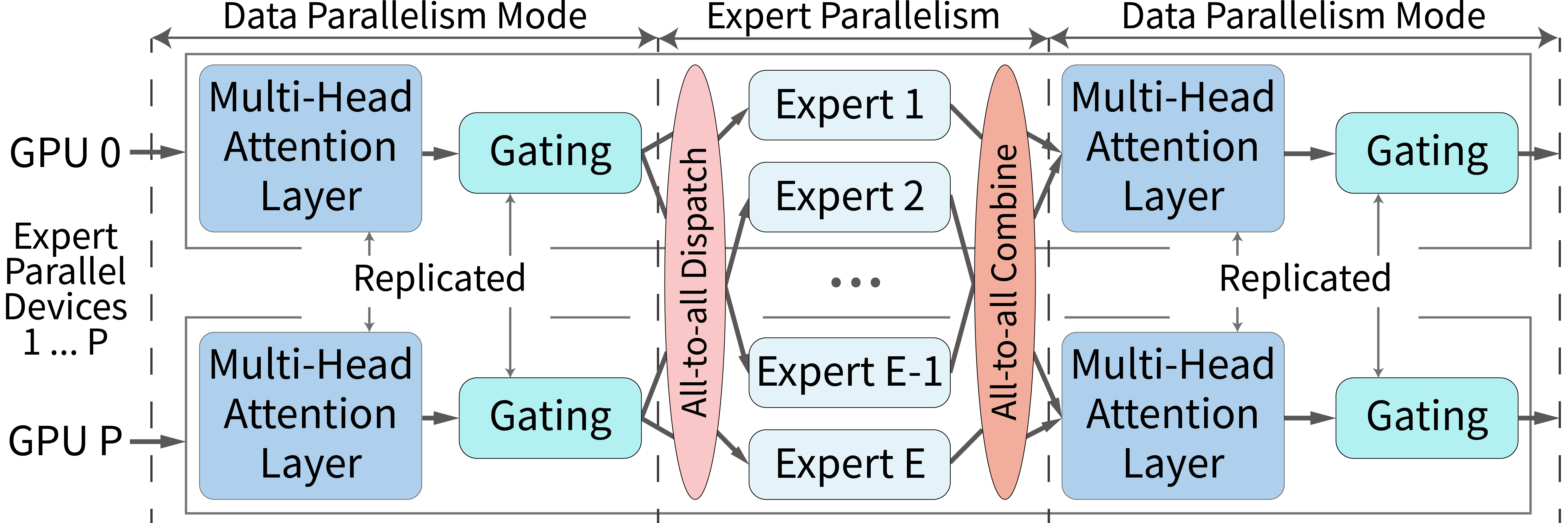}
			\caption{An illustration of expert parallelism.}
			\label{Fig. 1b}
		\end{subfigure}
		\caption{An example of a transformer block with MoE layer and expert parallelism.}
		\label{Fig. 1}
	\end{minipage}
	\vspace{-0.3cm}
\end{figure*}
\begin{figure*}[!t]
	\captionsetup[subfigure]{justification=centering}
	\centering
	\begin{minipage}{\linewidth}
		\includegraphics[width=\linewidth]{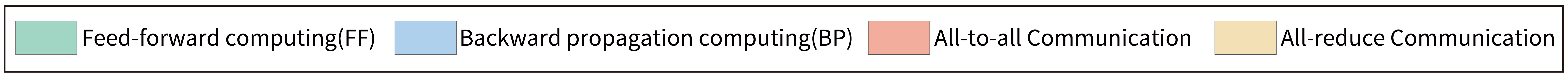}
	\end{minipage}
	\centering
	\begin{minipage}{\linewidth}
		\centering
		\begin{subfigure}{0.44\textwidth}
			\includegraphics[width=\linewidth]{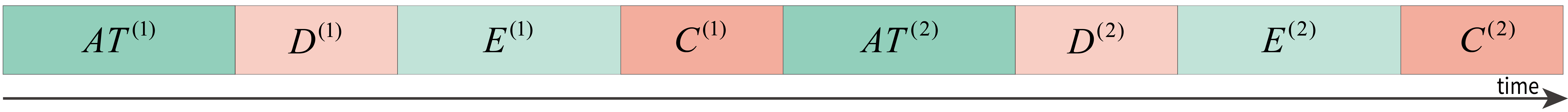}
			\caption{FF without pipelining.}
			\label{Fig. 3a}
		\end{subfigure}
		\centering
		\begin{subfigure}{0.55\textwidth}
			\includegraphics[width=\linewidth]{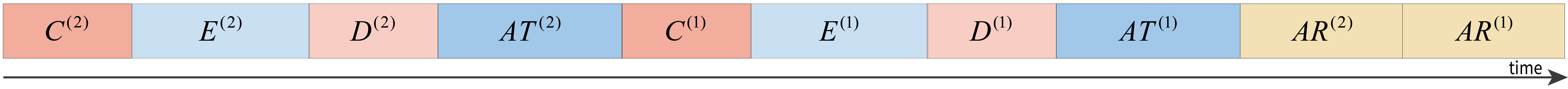}
			\caption{BP without pipelining.}
			\label{Fig. 3b}
		\end{subfigure}
		\centering
		\begin{subfigure}{0.44\textwidth}
			\includegraphics[width=\linewidth]{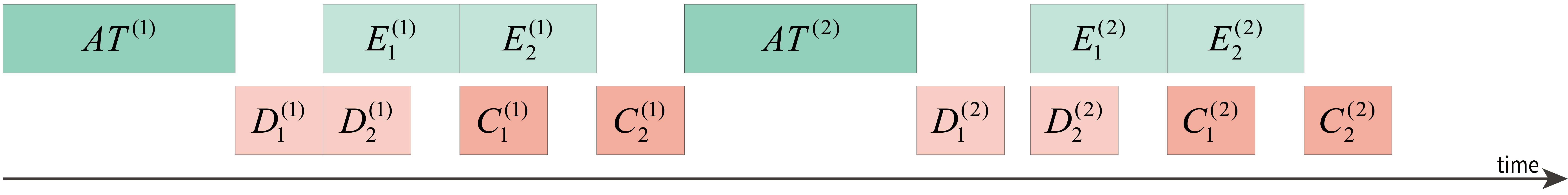}
			\caption{FF with $ R=2 $ in MoE layer.}
			\label{Fig. 3c}
		\end{subfigure}
		\centering
		\begin{subfigure}{0.55\textwidth}
			\includegraphics[width=\linewidth]{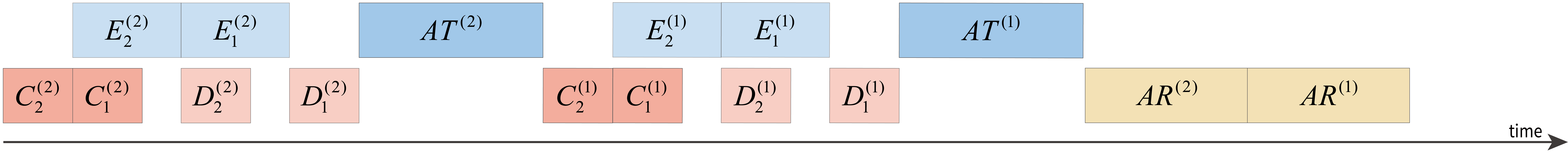}
			\caption{BP with $ R=2 $ in the MoE layer.}
			\label{Fig. 3d}
		\end{subfigure}
		\centering
		\begin{subfigure}{0.44\textwidth}
			\includegraphics[width=\linewidth]{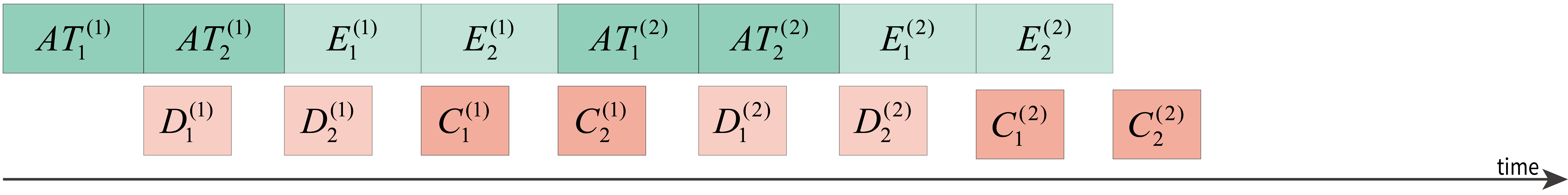}
			\caption{FF with $ R=2 $ in MHA and \\MoE layer.}
			\label{Fig. 3e}
		\end{subfigure}
		\centering
		\begin{subfigure}{0.55\textwidth}
			\includegraphics[width=\linewidth]{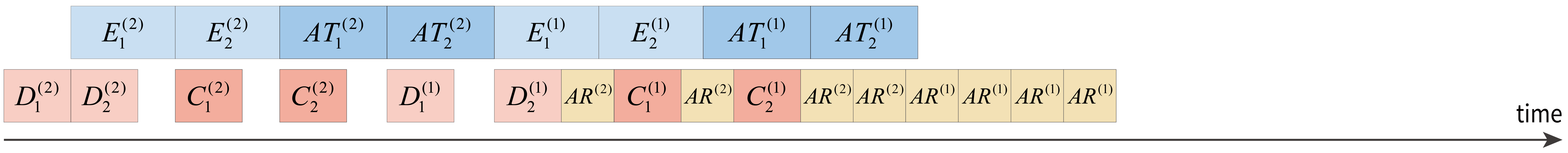}
			\caption{BP with $ R=2 $ in MHA and MoE layer + priority-based scheduling with all-reduce tensor chunks.}
			\label{Fig. 3f}
		\end{subfigure}
		\caption{An example of the execution timelines of the computing and communication tasks for a model consisting of two transformer blocks with MoE layer.}
		\label{Fig. 3}
	\end{minipage}
	\vspace{-0.6cm}
\end{figure*}

\vspace{-0.2cm}
\subsection{Expert Parallelism}
\vspace{-0.2cm}
We define $ P $ as the number of workers (or GPUs) in the cluster, $ L $ as the number of transformer blocks in a MoE model. Fig. \ref{Fig. 1b} shows an example of training an MoE model on $ P $ workers by expert parallelism, where each worker has different expert parameters. Then, $ AT_{r}^{(l)} $, $ E_{r}^{(l)} $, $ D_{r}^{(l)} $, and $ C_{r}^{(l)} $ respectively denote the $r$th subtask of MHA layer (including gating function), expert computing, dispatch A2A communication, and combining A2A communication of the $l$th transformer block. $ AR^{(l)} $ denotes all-reduce communication task of the $l$th transformer block. Figs. \ref{Fig. 3a} and \ref{Fig. 3b} show the timelines of multiple computing and communication tasks in the feed-forward computing and backward propagation for one transformer block in vanilla expert parallelism \citep{he2021fastmoe}. It is worth noting that the timeline for backward propagation is opposite to that of feed-forward computing, but the parameters of the MHA layer and the gating function need to be synchronized through additional all-reduce communication operation.

\vspace{-0.2cm}
\subsection{Performance Bottlenecks in Distributed MoE Training}
\vspace{-0.2cm}

Most of existing pipeline scheduling works (e.g., ScheMoE \citep{shi2024schemoe}, Tutel \citep{hwang2023tutel}, PipeMoE \citep{shi2023pipemoe}) partitions the input token tensor of the MoE layer in the data dimension according to the pipelining degree $ R $, where Figs. \ref{Fig. 3c} and \ref{Fig. 3d} illustrate a pipelining example with $ R=2 $ in the MoE layer of a transformer block. In addition, FasterMoE \citep{he2022fastermoe} splits the input tensor of the MoE layer based on the number of workers, enabling point-to-point communication between workers.
Although these works effectively reduce the training time by pipelining the expert computing task and the A2A communication task, they neglect the MHA layer computing task, gating task and all-reduce communication task. Moreover, some MoE frameworks (e.g., FSMoE \citep{FSMoE2025Pan} and Lina \citep{li2023accelerating}) also explore the pipeline of all-reduce communication tasks. Nevertheless, FSMoE focuses more on inter- and intra-node communication overlaps within the MoE layer, while Lina only optimizes MoE-layer-specific communication bottlenecks and not the entire transformer block. Therefore, we aim to pipeline all major computing and communication tasks across the transformer block to minimize the per-iteration time, which involves three main challenges:

\textbf{Complex dependencies between multi-type tasks.}~In distributed MoE training, there exists highly complex dependencies between the computing and communication tasks (see Figs. \ref{Fig. 3a} and \ref{Fig. 3b}) \citep{shi2024schemoe}. In particular, the typical parallelization schemes (e.g., pipeline parallelism \citep{huang2019gpipe, narayanan2019pipedream}, tensor parallelism \citep{zheng2022alpa, shoeybi2019megatron, zhai2023smartmoe}) are difficult to be overlaid directly on expert parallelism.

\textbf{Coexistence of A2A communication and all-reduce communication.}~Although there has been extensive studies \citep{zhang2017poseidon, shi2019mg, peng2019generic, gao2023us, chen2024centauri, jangda2022breaking, wang2022overlap} providing efficient scheduling algorithms for data-parallel training of all-reduce communication tasks, they focus on traditional convolution-based deep neural networks or LLMs without MoE layers. These studies cannot be directly applied to distributed MoE training because the additional A2A communication tasks are not equivalent to all-reduce communication tasks (see Fig. \ref{Fig. 3b}).

\textbf{Designing of an adaptive and generic pipeline scheduling framework.}~Currently, the performance of most scheduling frameworks depends on some hyperparameters setting \citep{bao2020preemptive, shi2021exploiting, gao2024dynamic}, which introduces additional workload for tuning them before training. Ideally, an adaptive framework should automatically tune hyperparameters and be directly deployable by modifying only the model definition or dataset interface. Meanwhile, a general framework should also be designed to be compatible with different optimization frameworks and communication stacks.

In this paper, our main goal is to address the above three challenges by proposing a scalable pipeline scheduling framework called FlowMoE for multi-type tasks in distributed MoE training. All frequently used notations are summarized in Table \ref{Table notation} of the Appendix.
\vspace{-0.3cm}
\section{FlowMoE}
\vspace{-0.1cm}
\label{Section 3}
\vspace{-0.2cm}
\subsection{Overview}
\vspace{-0.2cm}
\label{Section 3.1}
\begin{figure}[!t]
	\centering
	\includegraphics[width=0.7\linewidth]{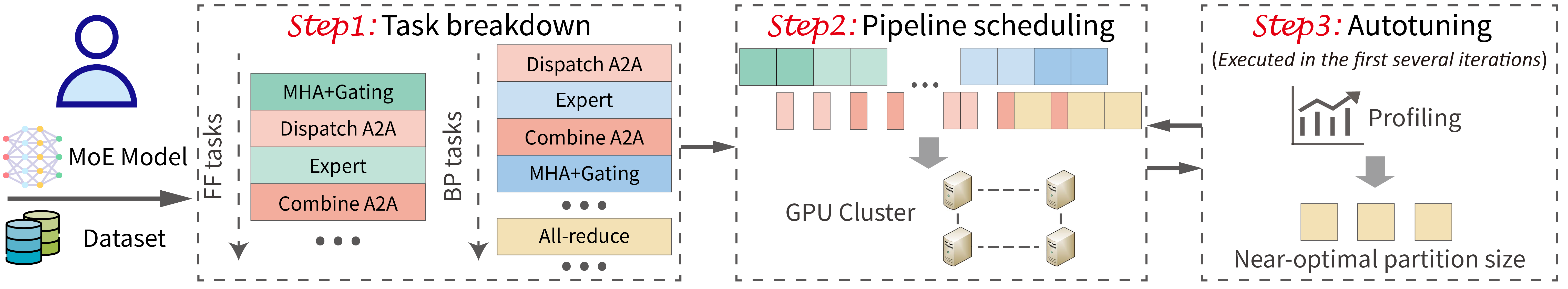}
	\caption{Workflow of FlowMoE.}
	\label{Fig. 4}
	\vspace{-0.6cm}
\end{figure}
Fig. \ref{Fig. 4} illustrates the workflow of FlowMoE. FlowMoE performs efficient pipeline training leveraging the given MoE structure and the dataset. First, multi-type computing and communication tasks are broken down according to the pipelining degree and wait for pipeline scheduling. Second, FlowMoE schedules all computing and communication tasks according to their dependencies and defined priority for distributed training on the GPU cluster (see details in Secs. \ref{Section 3.2} and \ref{Section 3.3}). Third, FlowMoE automatically tunes the partition size of all-reduce tensor based on the first several iterations during training by BO profiling (see details in Sec. \ref{Section 4.1}).


\vspace{-0.2cm}
\subsection{Pipeline for MHA and MoE Layer}
\vspace{-0.2cm}
\label{Section 3.2}
Based on the pipeline of the MoE layer, FlowMoE first pipelines the MHA layer computing and gating. It is worth noting that for the $l$th transformer block, the gating task only depends on the computing task of the MHA layer, and in the subsequent expression, we consider the computing task of both as a whole and denote it by $ AT^{(l)} $ ($ 1\le l\le L $). FlowMoE partitions the input tensor of each transformer block into $R$ equal-sized parts, and each computing or communicating task (except all-reduce communicating task) in the MHA layer and the MoE layer will be partitioned into $R$ independent subtasks. In particular, the tasks with the same type have the same execution time. The partitioned tasks can be represented as the set
\vspace{-0.2cm}
\begin{equation}
	\label{2}
	\mathbb{T}=\left \{AT_{r}^{(l)}, D_{r}^{(l)}, E_{r}^{(l)}, C_{r}^{(l)}, AR^{(l)}| 1\le r\le R\right \},
\end{equation}
where $ AT_{r}^{(l)} $ and $ E_{r}^{(l)} $ are computing tasks, and $ D_{r}^{(l)} $, $ C_{r}^{(l)} $ and $ AR^{(l)} $ are communication tasks.

Then, for $ 1\le l < L $, during feed-forward computing, the scheduling order of computing tasks can be denoted as
\begin{equation}
	\label{3}
	AT_{1}^{(l)}\!-\!\!>\!\! AT_{2}^{(l)}\!-\!\!>\!\!...\!-\!\!>\!\! AT_{R}^{(l)}\!-\!\!>\!\! E_{1}^{(l)}\!-\!\!>\!\! E_{2}^{(l)}\!-\!\!>\!\!...-\!\!>\!\! E_{R}^{(l)}\!-\!\!>\!\! AT_{1}^{(l+1)}\!-\!\!>\!\! ...-\!\!>\!\! E_{R}^{(l+1)},
\end{equation}
the scheduling order of A2A communication tasks can be denoted as
\vspace{-0.15cm}
\begin{equation}
	\label{4}
	D_{1}^{(l)}\!-\!\!>\!\! D_{2}^{(l)}\!-\!\!>\!\!...-\!\!>\!\! D_{R}^{(l)}\!-\!\!>\!\! C_{1}^{(l)}\!-\!\!>\!\! C_{2}^{(l)}\!-\!\!>\!\!...-\!\!>\!\! C_{R}^{(l)}\!-\!\!>\!\! D_{1}^{(l+1)}\!-\!\!>\!\!...\!-\!\!>\!\! C_{R}^{(l+1)}.
\end{equation}
During backward propagation, the scheduling order of computing tasks can be denoted as
\vspace{-0.15cm}
\begin{equation}
	\label{5}
	E_{R}^{(l+1)}\!-\!\!>\!\!...\!-\!\!>\!\! AT_{1}^{(l+1)}\!-\!\!>\!\! E_{R}^{(l)}\!-\!\!>\!\! E_{R-1}^{(l)}\!-\!\!>\!\!...\!-\!\!>\!\! E_{1}^{(l)}\!-\!\!>\!\! AT_{R}^{(l)}\!-\!\!>\!\! AT_{R-1}^{(l)}\!-\!\!>\!\!...\!-\!\!>\!\! AT_{1}^{(l)},
\end{equation}
the scheduling order of A2A communication tasks can be denoted as
\vspace{-0.15cm}
\begin{equation}
	\label{6}
	C_{R}^{(l+1)}\!-\!\!>\!\!...\!-\!\!>\!\! D_{1}^{(l+1)}\!-\!\!>\!\! C_{R}^{(l)}\!-\!\!>\!\! C_{R-1}^{(l)}\!-\!\!>\!\!...\!-\!\!>\!\! C_{1}^{(l)}\!-\!\!>\!\! D_{R}^{(l)}\!-\!\!>\!\! D_{R-1}^{(l)}\!-\!\!>\!\!...\!-\!\!>\!\! D_{1}^{(l)}.
\end{equation}
Figs. \ref{Fig. 3e} and \ref{Fig. 3f} show the example of execution timelines of the computing and communication tasks when pipelining the MHA layer and the MoE layer with $ R=2 $. Compared to pipelining the MoE layer only, the MHA layer computing task and gating task can be overlapped with the A2A communication tasks, thus shortening the per-iteration time.

\vspace{-0.2cm}
\subsection{Pipeline for All-reduce Communication}
\vspace{-0.2cm}
\label{Section 3.3}
Existing state-of-the-art scheduling frameworks centrally executes all-reduce communication tasks at the end of backward propagation for each iteration (we call it centralized scheduling of all-reduce communication tasks). To reduce all-reduce communication time, we further consider the pipeline of all-reduce communication tasks. Intuitively, in the backward propagation, the all-reduce communication task of transformer block $ l $ can overlap with computing tasks of transformer block $ l-1 $, since they both depend on the completion of MHA layer computing tasks of transformer block $ l $. However, in actual training, the all-reduce communication task of transformer block $ l $ will conflict with the A2A communication tasks of transformer block $ l-1 $. Therefore, we first mathematically model the timeline of backward propagation in one iteration to find the optimal scheduling of the two communication tasks that minimizes the training time. Considering resource competition in real training environments, we assume that only computing and communication tasks can be executed simultaneously on a GPU, while multiple computing or multiple communication tasks cannot be run simultaneously. Furthermore, there is no preemption between tasks: once a task starts execution, it must run to completion without interruption. We define $ \tau_b(\cdot) $ as the beginning execution timestamp of a task during the backward propagation, and $ t_b(\cdot) $ as the elapsed time of a task during the backward propagation. 
According to Fig. \ref{Fig. 3b}, the objective function and the dependencies between tasks can be expressed as follows ($ 1\le r\le R $):
\vspace{-0.1cm}
\begingroup
\setlength{\jot}{1pt}
\begin{subequations} \label{eq:optimization}
	\begin{align}
		\min \quad & T_b = \tau_b(AR^{(1)})+t_b(AR^{(1)})-\tau_b(C_R^{(L)}) \tag{6} \label{7} \\
		\text{s.t.} \quad & \tau_b(C_r^{(l-1)})\ge \tau_b(AT_r^{(l)})+t_b(AT_r^{(l)}), 1 < l \le L, \tag{6a} \label{8} \\
		& \tau_b(E_r^{(l)})\ge \tau_b(C_r^{(l)})+t_b(C_r^{(l)}), 1 \le l \le L, \tag{6b} \label{9} \\
		& \tau_b(D_r^{(l)})\ge \tau_b(E_r^{(l)})+t_b(E_r^{(l)}), 1 \le l \le L, \tag{6c} \label{10} \\
		& \tau_b(AT_r^{(l)})\ge \tau_b(D_r^{(l)})+t_b(D_r^{(l)}), 1 \le l \le L, \tag{6d} \label{11} \\
		& \tau_b(AR^{(l)})\ge \tau_b(AT_r^{(l)})+t_b(AT_r^{(l)}), 1 \le l \le L. \tag{6e} \label{12}
	\end{align}
\end{subequations}
\endgroup
We use the sign $ * $ to denote the timeline for centralized scheduling of all-reduce communication tasks. Now, let $ T_b $ denote the backward propagation time when the all-reduce communication task of a transformer block is inserted between any two A2A communication tasks. Also denote by $ T^{\ast}_b $ the backward propagation time under centralized scheduling of all-reduce communication tasks. With this notation, we have the following theorem:
\begin{theorem}\label{Theorem 1}
	If the scheduling order satisfies Eqs. \ref{5} and \ref{6}, then we have $T_b \le T^*_b$.
\end{theorem}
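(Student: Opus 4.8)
The plan is to compare the two schedules by tracking how the all-reduce communication interacts with the A2A communication stream during backward propagation. The key structural observation is that, under the scheduling orders of Eqs.~\ref{5} and \ref{6}, the computing-task timeline and the A2A-communication timeline are each a fixed, totally ordered sequence whose internal execution is unaffected by where we place the all-reduce tasks; the only resource that all-reduce competes for is the communication engine, which it shares with the $D$ and $C$ tasks. So I would first argue that inserting $AR^{(l)}$ between two consecutive A2A tasks does not violate any dependency in Eqs.~\ref{8}--\ref{12}: constraint \ref{12} is the only one mentioning $AR^{(l)}$, and since in the backward pass $AT^{(l)}$ precedes all of $D^{(l)},C^{(l)},E^{(l)}$ and hence precedes the A2A tasks of block $l-1$ among which $AR^{(l)}$ is slotted, feasibility is preserved. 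This reduces the theorem to a statement purely about the makespan of a single machine (the communication engine) processing a fixed sequence of A2A chunks into which the all-reduce jobs are interleaved.

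Next I would set up the makespan bookkeeping. Let the communication engine, in the centralized schedule, process the A2A sequence $C_R^{(L)} \to \dots \to D_1^{(1)}$ and only afterwards run $AR^{(L)}, \dots, AR^{(1)}$; write $T_b^{\ast}$ as the sum of all A2A times, plus all idle gaps forced by the computing-side dependencies \ref{8}--\ref{11}, plus $\sum_l t_b(AR^{(l)})$ at the tail. For the interleaved schedule, I would show that each $AR^{(l)}$ can be placed into a slot where the communication engine would otherwise be \emph{idle} — namely, while the engine waits for some expert computing task $E_r^{(l-1)}$ or MHA task $AT_r^{(l-1)}$ of the preceding block to finish before the next A2A chunk is ready. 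The crucial inequality to extract from the constraints is that the total idle time on the communication engine in the centralized schedule is at least $\sum_{l=2}^{L} t_b(AR^{(l)})$ minus whatever tail is unavoidable for $AR^{(1)}$; equivalently, the computing critical path between the start of $C_R^{(L)}$ and the start of $AR^{(1)}$ is long enough to absorb all but the last all-reduce. Then a standard exchange/shifting argument shows that moving an $AR^{(l)}$ job from the tail into an earlier idle slot never increases the finish time of the last job $AR^{(1)}$, and can only decrease it when the absorbed idle time was genuine slack.

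Concretely, I would proceed by induction on $L$ (or on the number of all-reduce jobs not yet moved forward): assume the claim for $L-1$ blocks, then consider the interleaved schedule restricted to blocks $2,\dots,L$ together with their all-reduce tasks, apply the inductive bound, and finally account for block $1$, whose $AR^{(1)}$ appears in the objective \ref{7} and is the one term that genuinely extends past the A2A stream in both schedules. The telescoping of the per-block start times, using \ref{8}--\ref{11} to lower-bound $\tau_b(AT_r^{(l)}) - \tau_b(C_r^{(l-1)})$ chain by chain, yields $T_b \le T_b^{\ast}$.

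The main obstacle I anticipate is the interleaving feasibility at the \emph{granularity} of individual A2A subtasks: the all-reduce task $AR^{(l)}$ is a single monolithic job (not chunked, per Eq.~\ref{2}), so it can only be inserted between two whole A2A subtasks, and one must verify that there is always a gap at least as long as $t_b(AR^{(l)})$ — or, if not, that the overflow is charged against slack that the centralized schedule also pays. Handling the case where no single idle gap is large enough (so $AR^{(l)}$ pushes back a later A2A chunk) is where the argument gets delicate; I would deal with it by showing that any such pushback is bounded by the idle time the centralized schedule incurs later anyway, so the net makespan does not grow. The no-preemption and no-concurrency assumptions stated just before the theorem are exactly what make this charging argument go through cleanly.
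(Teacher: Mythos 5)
Your proposal rests on the same quantitative kernel as the paper's argument: inserting $AR^{(l+1)}$ into an earlier communication slot can delay a downstream A2A task, and hence the dependent computing chain, by at most $t_b(AR^{(l+1)})$, which is exactly the time that task would have occupied at the tail in the centralized schedule, so the makespan cannot grow. The two routes differ in scaffolding. The paper fixes a single $AR^{(l+1)}$, enumerates four candidate insertion positions relative to the $D_r^{(l)}/C_r^{(l)}$ stream, and for each derives the pointwise start-time bound $\tau_b(E_r^{(l)}) \le \tau_b^*(E_r^{(l)}) + t_b(AR^{(l+1)})$, then telescopes from $E_r^{(l)}$ to the end of $AR^{(1)}$ using the fact that, past $E_r^{(l)}$, the centralized timeline carries one extra all-reduce task that the interleaved timeline has already dispatched; it is a direct per-case comparison of two concrete schedules. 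You instead propose an exchange argument plus induction on $L$, moving one $AR$ job at a time from the tail into an earlier slot and showing each move is makespan-non-increasing. Your framing is more modular and would transfer more cleanly to the chunked setting of Theorem~\ref{Theorem 2}, while the paper's is more explicit and avoids carrying an induction hypothesis across blocks. The one place your plan is not yet a proof is the case you flag yourself, where no single idle gap is long enough and the inserted $AR^{(l+1)}$ pushes back a later A2A chunk: ``I would deal with it by showing that any such pushback is bounded by the idle time the centralized schedule incurs later anyway'' is precisely the per-task lemma the paper proves via constraints \ref{8}--\ref{12} — the pushback suffered by any subsequent $E_r^{(l)}$ or $AT_r^{(l)}$ is at most $t_b(AR^{(l+1)})$, which then cancels against the one fewer tail all-reduce. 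Once that lemma is isolated, your exchange step and induction close.
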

\begin{proof}
	\vspace{-0.4cm}
	Theorem \ref{Theorem 1} proof is provided in Appendix \ref{Proof of Theorem 1}.
	\vspace{-0.2cm}
\end{proof}

According to Theorem \ref{Theorem 1}, we find that partitioning and inserting all-reduce communication tasks into the gaps between any of the A2A communication tasks can increase the overlap between all-reduce communication tasks and computing tasks. Thus, we design a priority scheduling mechanism of communication tasks based on the all-reduce tensor chunks in FlowMoE. Specifically, during the backward propagation, FlowMoE first slices the tensor for the all-reduce communication tasks of each layer into tensor chunks with size $ S_p $. Second, FlowMoE maintains a communication task pool which includes all the all-reduce tensor chunks and A2A communication tasks, and sets the scheduling priority of all-reduce tensor chunks to be lower than that of A2A communication tasks. In other words, the communication tasks of the all-reduce tensor chunks will be executed immediately when there is no A2A communication tasks. Fig. \ref{Fig. 3f} shows an example of execution timeline when using 2-degree pipelining in MHA layer and MoE layer as well as priority scheduling mechanism of communication tasks based on all-reduce tensor chunks. It can be observed that the all-reduce tensor chunks fully occupy the gap between the A2A communication tasks and thereby maximize the overlap between the computing tasks and the communication tasks.

\begin{figure}[!t]
	\centering
	\begin{minipage}{0.52\linewidth}
		\begin{algorithm}[H]
			\linespread{0.85}\selectfont
			\fontsize{8.5pt}{\baselineskip}\selectfont
			\caption{Training process in FlowMoE}\label{algorithm 1}
			\textbf{Input: } Dataset $D=\{(x_1,y_1),...,(x_n,y_n)\}$, $ L $.\\
			\textbf{Output: } Model Weights $ W=\{W_1,...,W_L\} $.
			\begin{algorithmic}[1]
				\State Initialize DataQueue for data transferred between tasks;
				\State Initialize A2AQueue for A2A communication tasks;
				\State Initialize ARQueue for all-reduce communication tasks;
				\State Split $ D $ for $ d_1, d_2,...,d_R $ into DataQueue;
				\State Communication\_pool\_management.start();
				\For{$l=1\rightarrow L$}//Feed-forward Computing
				
				\For{$r=1\rightarrow R$}
				\State $ d_r $=AT.FFcomp(DataQueue.get());
				\State A2AQueue.put($ d_r $);
				\EndFor
				
				\For{$r=1\rightarrow R$}
				\State $ d_r $=E.FFcomp(DataQueue.get());
				\State A2AQueue.put($ d_r $);
				\EndFor
				
				\EndFor
				
				\For{$l=L\rightarrow 1$}//Backward Propagation
				
				\For{$r=1\rightarrow R$}
				\State A2AQueue.put(DataQueue.get());
				\EndFor
				
				\For{$r=1\rightarrow R$}
				\State $ d_r $=E.BPcomp(DataQueue.get());
				\State A2AQueue.put($ d_r $);
				\EndFor
				
				\For{$r=1\rightarrow R$}
				\State $ d_r $=AT.BPcomp(DataQueue.get());
				\State DataQueue.put($ d_r $);
				\EndFor
				
				\EndFor
				\State Waiting until all-reduce communication is finished;
				\State Update $ W $;
			\end{algorithmic}
		\end{algorithm}
	\end{minipage}
	\hfill
	\begin{minipage}{0.47\linewidth}
		\begin{algorithm}[H]
			\linespread{0.85}\selectfont
			\fontsize{8.5pt}{\baselineskip}\selectfont
			\caption{Communication pool management}\label{algorithm 2}
			\begin{algorithmic}[1]
				\State /* Partition an all-reduce tensor and enqueue tensor chunks. */
				\State Get $ S_p $ from BO;
				\Procedure{Partition}{$ ARTensor $}
				\State $ ARChunks $ = $ ARTensor $.partition($ S_p $);
				\State ARQueue.put($ ARChunks $); 
				\EndProcedure
				
				\State /* Communication task priority scheduling. */
				\Procedure{CommPoolManager}{}
				\While{True}
				\If{A2AQueue is not empty}
				\State $ d $= A2A.Comm(A2AQueue.get());
				\State DataQueue.put($ d $);
				\ElsIf{ARQueue is not empty}
				\State AR.Comm(ARQueue.get());
				\EndIf
				\EndWhile
				\EndProcedure
			\end{algorithmic}
		\end{algorithm}
		\vspace{-0.6cm}
		\begin{figure}[H]
			\centering
			\includegraphics[width=0.79\linewidth]{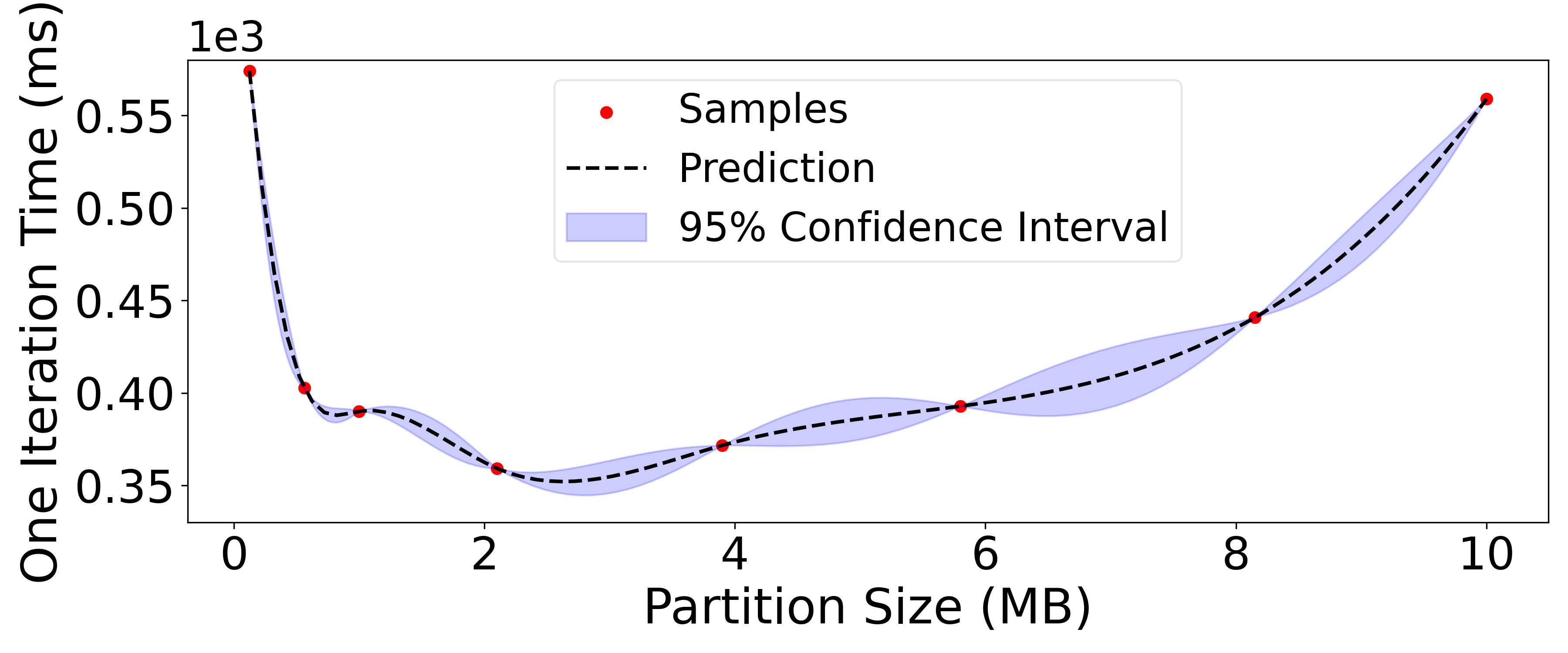}
			\caption{BO example: tuning $ S_p $ for training BERT-Large-MoE with a 16-GPU cluster.}
			\label{Fig. 5}
		\end{figure}
	\end{minipage}
	\vspace{-0.5cm}
\end{figure}
\vspace{-0.1cm}
\section{Parameter Optimization, Algorithm Design and System Implementation}
\label{Section 4}
\vspace{-0.2cm}
\subsection{Auto-Tuning Partition Size by Bayesian Optimization}
\vspace{-0.2cm}
\label{Section 4.1}
Ideally, if the communication of all-reduce tensor chunks does not introduce extra startup overhead, we have the following theorem:
\begin{theorem}\label{Theorem 2}
	When the scheduling order satisfies Eqs. \ref{5} and \ref{6}, and using the priority scheduling mechanism of communication tasks based on all-reduce tensor chunks, the time per iteration will be minimized if $ S_p\to 0 $ and there is no startup overhead for communicating all-reduce tensor chunks.
\end{theorem}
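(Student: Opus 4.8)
\textbf{Proof plan for Theorem \ref{Theorem 2}.} The plan is to reduce the statement to a claim about the backward pass, lower-bound the backward time $T_b$ by a quantity that depends neither on the chunk size $S_p$ nor on the particular way the all-reduce traffic is scheduled, and then show that the priority mechanism of Sec.~\ref{Section 3.3} attains this bound in the limit $S_p\to 0$ exactly when the per-chunk startup cost is zero. First I would note that one iteration splits into the forward pass, the backward pass of duration $T_b$, and the weight update, and that neither the forward pass nor the update involves the all-reduce tasks; hence it suffices to minimize $T_b$.

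Next I would exploit that in the dependency graph (constraints \ref{8}--\ref{12}) every all-reduce task $AR^{(l)}$ has only incoming edges (from the $AT^{(l)}_r$) and no outgoing edge. When the all-reduce chunks are given strictly lower priority than the A2A tasks, the joint timeline of the computing and A2A tasks can be delayed relative to the all-reduce-free timeline driven by Eqs.~\ref{5} and \ref{6} only through the non-preemption of an in-flight chunk --- an amount that is $O(S_p)$ and vanishes as $S_p\to 0$; and for an arbitrary (non-priority) schedule it can only be larger. Consequently, in the limit the finish time $F_1$ of the last computing task $AT^{(1)}_1$ and the total A2A busy time $W_{A2A}$ coincide with their all-reduce-free values, while by \ref{12} no portion of $AR^{(l)}$ may start before the finish time $F^{(l)}$ of $AT^{(l)}_r$, with $F^{(L)}<\cdots<F^{(1)}=F_1$.

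Then I would assemble the lower bound. Because the single communication resource is non-preemptive and serves one task at a time, every schedule must serialize $W_{A2A}$ units of A2A work and $W_{AR}$ units of all-reduce work, where $W_{AR}$ is the same for every $S_p$ under the no-startup-overhead assumption; the release times $F^{(l)}$ then force a fixed minimum of all-reduce volume into the tail after $F_1$ (all of $AR^{(1)}$ in particular) and a fixed unavoidable communication idle before any all-reduce becomes available. Accounting over the $F^{(l)}$ gives $T_b\ge T_b^{\mathrm{LB}}$ for every schedule and every $S_p$, where $T_b^{\mathrm{LB}}$ is the makespan of the corresponding fluid, infinitely divisible, work-conserving schedule that runs all-reduce flow only in the gaps left by the A2A flow. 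Complementing Theorem \ref{Theorem 1} --- which already shows interleaving all-reduce among the A2A tasks weakly beats centralized scheduling --- I would finally argue that the discrete priority mechanism converges to this fluid optimum: as $S_p\to 0$ every positive communication gap can be partially filled, so the channel is idle only when the A2A queue is empty and every released all-reduce chunk is already sent (precisely the fluid schedule), and with no startup overhead the finer partition never inflates $W_{AR}$. Hence $T_b\to T_b^{\mathrm{LB}}$, the global lower bound, so $T_b$ and therefore the per-iteration time is minimized; a fixed $S_p>0$ leaves gaps too small to fill and any nonzero startup cost inflates $W_{AR}$, so nothing beats the limit.

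The step I expect to be the main obstacle is making the lower-bound accounting rigorous: an exchange argument showing that, with the fixed compute/A2A skeleton, the release times $F^{(l)}$, and the unit-capacity channel, no reordering of which block's all-reduce is sent when can finish before the greedy work-conserving fluid schedule --- together with a continuity argument that the discrete makespan $T_b(S_p)$ has no jump as $S_p\to 0$. The delicate case is when the cumulative released all-reduce volume temporarily exceeds (or falls short of) the communication idle available so far, so that part of some $AR^{(l)}$ spills into later gaps; tracking that spill is where the real work lies, while the rest is bookkeeping over Eqs.~\ref{5}--\ref{6}.
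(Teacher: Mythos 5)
Your proposal is correct and follows essentially the same route as the paper's proof: the paper likewise argues that the A2A gaps are fixed by Eqs.~\ref{5}--\ref{6}, that zero startup overhead keeps the total all-reduce work constant across $S_p$, that a finite chunk can delay an A2A task through non-preemption and thereby shift the compute timeline, and that $S_p\to 0$ removes this delay and maximally fills the idle communication time. The only difference is presentational --- you propose to formalize the ``maximizing gap occupation is optimal'' step via a fluid lower bound and an exchange argument, whereas the paper treats that step as self-evident.
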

\begin{proof}
	\vspace{-0.4cm}
	Theorem \ref{Theorem 2} proof is provided in Appendix \ref{Proof of Theorem 2}.
	\vspace{-0.3cm}
\end{proof}

In the actual training, the communication tasks of all-reduce tensor chunks will introduce extra startup overhead \citep{gao2023us}. Therefore, the partition size of all-reduce tensor chunks, $ S_p $, becomes the knob that determines the trade-off between optimal scheduling and system overhead. Since it is non-trivial to explicitly model the iteration time as a function of $ S_p $, to guarantee the adaptivity of FlowMoE, we adopt BO to automatically tune $ S_p $ during training, which attempts to find good parameters of an unknown objective function in as few number of trials as possible. BO's parameter settings, importance, and performance evaluation are described in detail in Appendix \ref{BO}.
The goal of BO in FlowMoE is to minimize the per-iteration time. Specifically, BO fits an objective function by sampling different ($ S_p $, per-iteration time) pairs and continuously suggests the next $ S_p $ to obtain a new objective function value, where the per-iteration time corresponding to different $ S_p $ in all pairs is measured from the average of multiple iterations (e.g., 10 iterations). In other words, BO can accurately predict a near-optimal $ S_p $ based on enough samples. Fig. \ref{Fig. 5} illustrates an example of using BO to tune $ S_p $. In the actual training, with only 8 samples, BO returns a near-optimal value at 2.5MB with a good confidence.

\vspace{-0.2cm}
\subsection{Algorithm Design}
\vspace{-0.2cm}
\label{Section 4.2}
Algorithm \ref{algorithm 1} formally provides the pipeline scheduling training process of $R$-degree computing and communication tasks using FlowMoE. Moreover, Algorithm \ref{algorithm 2} provides the communication pool management, including the splitting of all-reduce tensors and the priority scheduling of communication tasks. A detailed description of the two algorithms is provided in Appendix \ref{Algorithm Description}. 

Compared to the vanilla expert parallelism \citep{he2021fastmoe}, the overhead of pipeline scheduling in FlowMoE mainly comes from two parts: the partition of all-reduce tensors and the objective function fitting during BO. First, when the structure of the gating function is a typical linear layer of $ M\times E $, the total number of parameters in the MHA layer and gating function is $4M^2+M\times E$. According to Algorithm \ref{algorithm 2}, the computational complexity of the scheduling algorithm of FlowMoE in each iteration is $ \mathcal{O}(L\times \frac{4M^2+M\times E}{S_p}) $, which is acceptable because it is less than 1\% of one iteration time in practical experiments, and the time of all-reduce tensor partitioning can be further overlapped with the time of some computing/communication tasks (we discuss its scalability in Appendix \ref{Algorithm Description}).


Second, BO utilizes the first several iterations of the training process to automatically tune $S_p$ to a near-optimal value. In our experiments, BO samples 8 values of $S_p$ and record the iteration time corresponding to each value by averaging 10 iterations.
The computational overhead introduced by BO is negligible compared to the total training time, as quantified in Appendix \ref{BO}. Additionally, if the hardware environment changes, BO will be re-executed to tune $ S_p $, which is discussed in Appendix \ref{Robustness Analysis}. These details are omitted from the presentation of Algorithm \ref{algorithm 1} to focus more on the pipeline scheduling process of FlowMoE.

\vspace{-0.3cm}
\subsection{System Implementation}
\vspace{-0.2cm}
\label{Section 4.3}
We deploy FlowMoE in PyTorch with its API, which usually supports class inheritance with flexible Python language and has a high generality to be compatible with different optimization frameworks and communication stacks. In particular, we implement FlowMoE atop Tutel \citep{hwang2023tutel}, a highly optimized MoE acceleration library that is deeply integrated into PyTorch and supports asynchronous execution of communication and computing tasks. Tutel has also been used as a default MoE training module by DeepSpeed \citep{rajbhandari2022deepspeed}. Fig. \ref{Fig. 6} shows the overview of FlowMoE architecture (the left side is closer to the user level), where its detailed description and technical implementations can be found in Appendix \ref{System Description}.

\begin{table}[!t]
	\renewcommand{\arraystretch}{1.0}
	\caption{Benchmark models.}
	\label{Table 3}
	\centering
	
	\begin{tabular}{p{74pt} >{\centering\arraybackslash}p{57pt} >{\centering\arraybackslash}p{30pt} >{\centering\arraybackslash}p{51pt} >{\centering\arraybackslash}p{2pt} >{\centering\arraybackslash}p{0.5pt} >{\centering\arraybackslash}p{9pt} >{\centering\arraybackslash}p{9pt} >{\centering\arraybackslash}p{9pt} >{\centering\arraybackslash}p{5pt} >{\centering\arraybackslash}p{0.5pt} >{\centering\arraybackslash}p{6pt}}
		\hline
		\multirow{2}{*}{MoE Model} & \multirow{2}{*}{\makecell{\# Params\\(MHA+Gating)}} & \multirow{2}{*}{\makecell{\# Params\\(Experts)}} & \multirow{2}{*}{Dataset} & \multicolumn{8}{c}{Configurations} \\\cline{5-12}
		& & & & L & B & N & M & H & E/P & k & f \\
		\hline
		GPT2-Tiny-MoE & 3.2M & 50.4M & OpenWebText & 12 & 4 & 256 & 256 & 512 & 1 & 2 & 1.0\\
		BERT-Large-MoE & 25.2M & 806.5M & wikitext-103 & 24 & 4 & 512 & 512 & 1024 & 2 & 1 & 1.0\\
		LLaMA2-MoE & 134.2M & 4297.6M & wikitext-103 & 32 & 4 & 512 & 1024 & 4096 & 1 & 1 & 1.0\\
		LLaMA2-MoE-L & 268.4M & 8595.2M & wikitext-103 & 64 & 4 & 512 & 1024 & 4096 & 1 & 1 & 1.0\\
		DeepSeek-V2-S & 419.6M & 2014.1M & OpenWebText & 4 & 4 & 256 & 5120 & 1536 & 2 & 8 & 1.0\\
		DeepSeek-V2-M & 734.3M & 3524.7M & OpenWebText & 7 & 4 & 256 & 5120 & 1536 & 2 & 1 & 1.0\\
		\hline
	\end{tabular}
	\vspace{-0.3cm}
\end{table}
\begin{figure}[!t]
	\centering
	\includegraphics[width=0.7\linewidth]{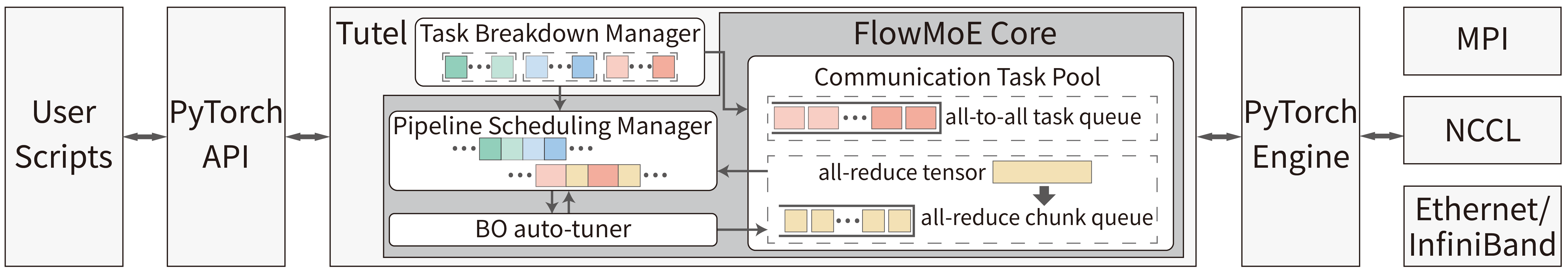}
	\caption{Overview of FlowMoE architecture (dark gray parts are new).}
	\label{Fig. 6}
	\vspace{-0.6cm}
\end{figure}
\vspace{-0.3cm}
\section{Evaluation}
\label{Section 5}
\vspace{-0.3cm}
\subsection{Experimental Settings}
\vspace{-0.2cm}
\textbf{Testbed setup.}~We use two clusters. (1) \textit{Cluster 1} consists of 2 nodes connected with 100Gb/s bandwidth. Each node is equipped with 8 NVIDIA RTX3090 GPUs (24 GB of memory per GPU) connected with PCIe3.0x16. The CPU is Intel Xeon(R) Gold 6248R. (2) \textit{Cluster 2} consists of 4 nodes connected with 10Gb/s bandwidth. Each node is equipped with 2 NVIDIA RTX2080Ti GPUs (12 GB of memory per GPU) connected with PCIe switches. The CPU is Intel Xeon(R) Gold 5118.

\textbf{Models with MoE Layers.}~(1) \textit{Customized MoE layers:}~We cover typical configurations of MoE layers by choosing combinations of input parameters, where $ B\in\{2,4,8\} $, $ f\in\{1.0, 1.1, 1.2\} $, $ N\in\{512, 1024, 2048\} $, $ M\in\{512, 1024, 2048, 4096, 8192\} $ and $ H\in\{512, 1024, 2048, 4096, 8192\} $. We set the number of experts equal to the number of GPUs (i.e., $ E=P $) and $ k=2 $. (2) \textit{Real-world MoE models:}~We also choose four popular models, i.e., GPT2-Tiny-MoE and DeepSeek-V2 \citep{liu2024deepseek} for the language modeling task on the OpenWebText dataset \citep{OpenWebText}, and BERT-Large-MoE and LLaMA2-MoE for the text generation task on the wikitext-103 dataset \citep{merity2016pointer}. We replace all feed-forward layers in GPT2-Tiny \citep{radford2019language}, BERT-Large \citep{devlin2019bert}, and LLaMA2 \citep{touvron2023llama} with MoE layers to construct MoE models. 
The detailed configuration is shown in Table \ref{Table 3}. All parameters and gradients of MoE models are stored as 32-bit single precision floating point numbers.  

\textbf{Baselines.}~We compare FlowMoE with PyTorch-based vanilla expert parallelism (vanillaEP) \citep{he2021fastmoe} and existing state-of-the-art MoE frameworks, including ScheMoE \citep{shi2024schemoe}, FSMoE \citep{FSMoE2025Pan}, Tutel \citep{hwang2023tutel} and FasterMoE \citep{he2022fastermoe}, with a pipelining degree of $ R = 2 $ unless specified otherwise. We use the official implementations of these frameworks. 
We focus on per-iteration training time, energy consumption and memory usage as performance metrics. All the reported numbers are averaged over 1000 iterations.

\vspace{-0.3cm}
\subsection{Experimental Results}
\vspace{-0.2cm}
\begin{table}[!t]
	\renewcommand{\arraystretch}{1.1}
	\fontsize{8.5pt}{\baselineskip}\selectfont
	\setlength{\tabcolsep}{3pt} 
	\caption{Comparison of average per-iteration time in milliseconds. S1, S2, S3, S4 and S5 are the speedups of FlowMoE over ScheMoE, FSMoE, Tutel, FasterMoE, and vanillaEP, respectively.}
	\label{Table 4}
	\centering
	
	\begin{tabular}{>{\centering\arraybackslash}p{13pt} >{\centering\arraybackslash}p{63pt} >{\centering\arraybackslash}p{30pt} >{\centering\arraybackslash}p{34pt} >{\centering\arraybackslash}p{19pt} >{\centering\arraybackslash}p{23pt} >{\centering\arraybackslash}p{30pt} >{\centering\arraybackslash}p{32pt} >{\centering\arraybackslash}p{15pt} >{\centering\arraybackslash}p{15pt} >{\centering\arraybackslash}p{15pt} >{\centering\arraybackslash}p{15pt} >{\centering\arraybackslash}p{15pt}}
		\hline
		\multirow{2}{*}{\makecell{\# of\\GPUs}} & \multirow{2}{*}{Model} & \multicolumn{6}{c}{Time (ms)} & \multirow{2}{*}{$S_5$} & \multirow{2}{*}{$S_4$} & \multirow{2}{*}{$S_3$} & \multirow{2}{*}{$S_2$} & \multirow{2}{*}{$S_1$} \\\cline{3-8}
		& & vanillaEP & FasterMoE & Tutel & FSMoE & ScheMoE & FlowMoE & & & & & \\
		\hline
		\multirow{4}{*}{4} & GPT2-Tiny-MoE  & 104.2 & 90.1 & 85.6 & 82.8 & 80.8 & 66.1 & 1.58$\times$ & 1.36$\times$ & 1.29$\times$ & 1.25$\times$ & 1.22$\times$ \\
		& BERT-Large-MoE & 373.1 & 300.9 & 314.6 & 278.1 & 273.6 & 239.5 & 1.56$\times$ & 1.26$\times$ & 1.31$\times$ & 1.16$\times$ & 1.14$\times$ \\
		& LLaMA2-MoE     & 1262.3 & 1187.5 & 1029.5 & 928.1 & 957.7 & 763.9 & 1.65$\times$ & 1.55$\times$ & 1.34$\times$ & 1.21$\times$ & 1.25$\times$ \\
		& DeepSeek-V2-S  & 2789.7 & 2261.4 & 2276.5 & 2096.3 & 2166.1 & 1740.8 & 1.60$\times$ & 1.29$\times$ & 1.30$\times$ & 1.20$\times$ & 1.24$\times$ \\
		\hline
		\multirow{4}{*}{8} & GPT2-Tiny-MoE  & 125.1 & 119.3 & 116.8 & 98.8 & 107.5 & 87.6 & 1.43$\times$ & 1.36$\times$ & 1.33$\times$ & 1.13$\times$ & 1.23$\times$ \\
		& BERT-Large-MoE & 428.8 & 354.4 & 377.5 & 345.1 & 331.2 & 283.2 & 1.51$\times$ & 1.25$\times$ & 1.33$\times$ & 1.22$\times$ & 1.17$\times$ \\
		& LLaMA2-MoE     & 1563.3 & 1427.7 & 1187.9 & 1110.4 & 1089.9 & 906.0 & 1.73$\times$ & 1.57$\times$ & 1.31$\times$ & 1.23$\times$ & 1.20$\times$ \\
		& DeepSeek-V2-S  & 4037.8 & 3370.5 & 3351.7 & 2985.4 & 3138.3 & 2384.9 & 1.69$\times$ & 1.41$\times$ & 1.39$\times$ & 1.25$\times$ & 1.31$\times$ \\
		\hline
		\multirow{4}{*}{16} & GPT2-Tiny-MoE  & 169.5 & 135.3 & 129.3 & 114.8 & 116.4 & 95.6 & 1.77$\times$ & 1.41$\times$ & 1.35$\times$ & 1.20$\times$ & 1.22$\times$ \\
		& BERT-Large-MoE & 537.8 & 490.8 & 501.1 & 421.9 & 405.6 & 351.9 & 1.53$\times$ & 1.39$\times$ & 1.42$\times$ & 1.19$\times$ & 1.15$\times$ \\
		& LLaMA2-MoE     & 1987.7 & 1759.1 & 1534.1 & 1292.6 & 1374.3 & 1124.0 & 1.76$\times$ & 1.57$\times$ & 1.36$\times$ & 1.15$\times$ & 1.22$\times$ \\
		& DeepSeek-V2-S  & 5843.3 & 4562.5 & 4481.4 & 3895.6 & 4093.7 & 3205.3 & 1.82$\times$ & 1.42$\times$ & 1.39$\times$ & 1.22$\times$ & 1.28$\times$ \\
		\hline
	\end{tabular}
	\vspace{-0.2cm}
\end{table}
\begin{table}[!t]
	\renewcommand{\arraystretch}{1.1}
	\caption{Average per-iteration time with different pipelining degrees on DeepSeek-V2-S. S1 and S2 are the speedups of FlowMoE over Tutel and ScheMoE, respectively.}
	\label{Table 5}
	\centering
	
	\begin{tabular}{>{\centering\arraybackslash}p{15pt} >{\centering\arraybackslash}p{32pt} >{\centering\arraybackslash}p{35pt} >{\centering\arraybackslash}p{35pt} >{\centering\arraybackslash}p{18pt} >{\centering\arraybackslash}p{18pt}}
		\hline
		\multirow{2}{*}{\makecell{$ R $}} & \multicolumn{3}{c}{Time (ms)} & \multirow{2}{*}{$S_2$} & \multirow{2}{*}{$S_1$} \\\cline{2-4}
		& Tutel & ScheMoE & FlowMoE & & \\
		\hline
		2 & 4481.4 & 4093.7 & 3205.3 & 1.39$\times$ & 1.28$\times$ \\
		
		4 & 4628.2 & 4164.0 & 3113.8 & 1.48$\times$ & 1.33$\times$ \\
		
		8 & 4588.9 & 4308.7 & 3295.9 & 1.39$\times$ & 1.30$\times$ \\
		\hline
	\end{tabular}
	\vspace{-0.1cm}
\end{table}
\begin{figure*}[!t]
	\captionsetup[subfigure]{justification=centering}
	\centering
	\begin{minipage}{0.8\linewidth}
		\centering
		\begin{subfigure}{0.49\textwidth}
			\includegraphics[width=\linewidth]{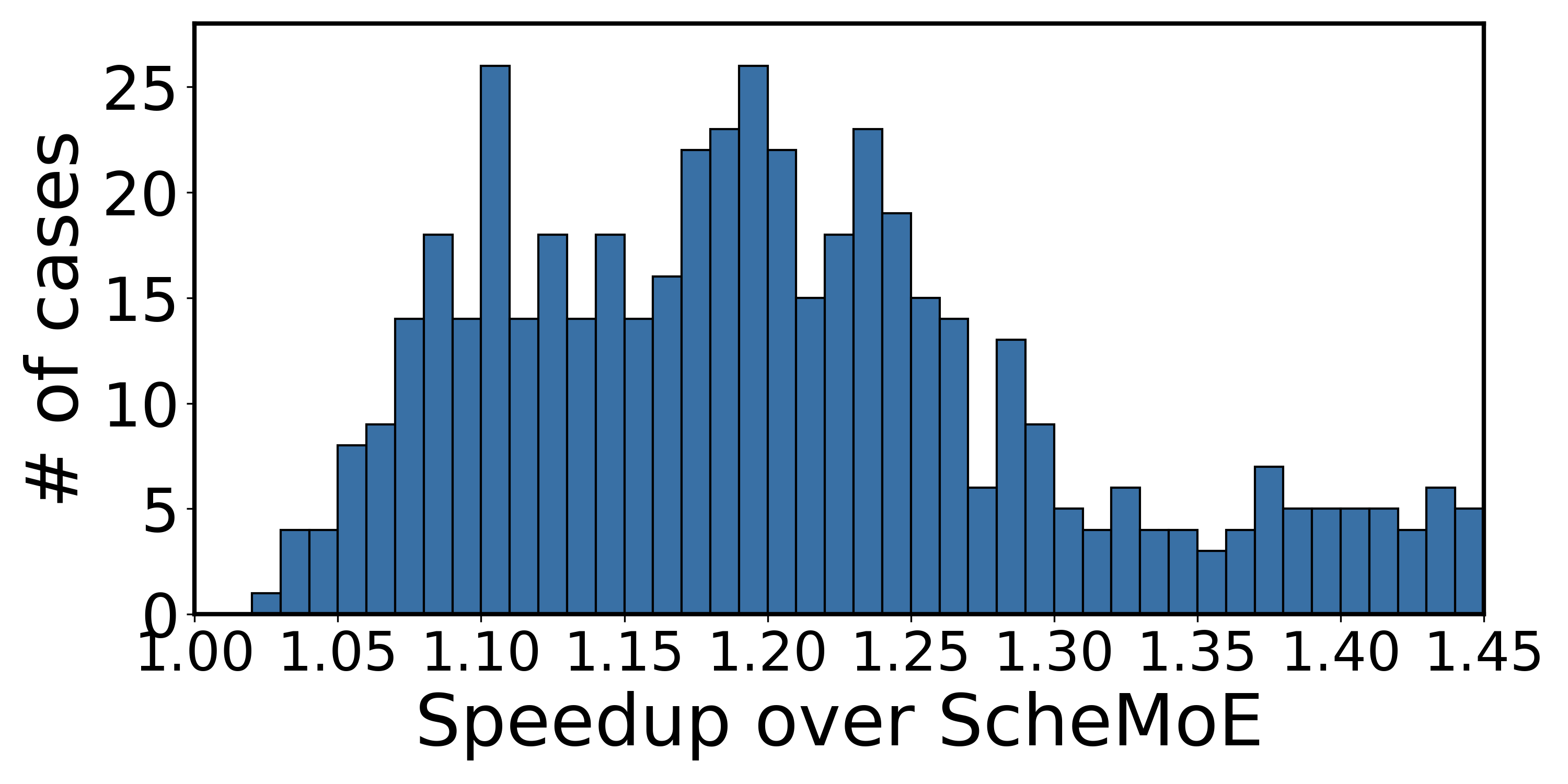}
			\caption{Speedup on Cluster 1.}
			\label{Fig. 7a}
		\end{subfigure}
		\vspace{-0.1cm}
		\centering
		\begin{subfigure}{0.49\textwidth}
			\includegraphics[width=\linewidth]{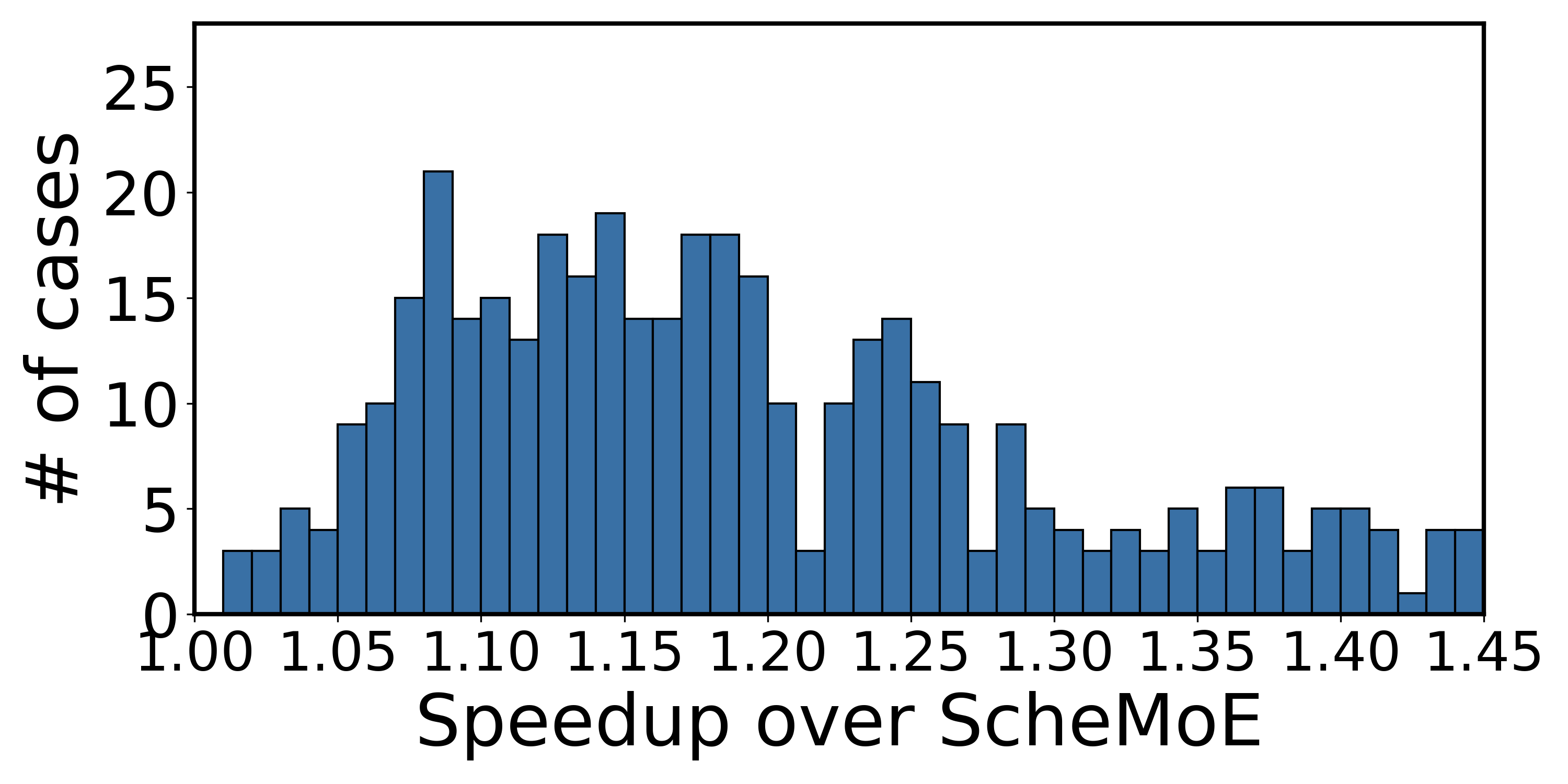}
			\caption{Speedup on Cluster 2.}
			\label{Fig. 7b}
		\end{subfigure}
		\vspace{-0.1cm}
		\caption{Statistic of the speedup over ScheMoE.}
		\label{Fig. 7}
	\end{minipage}
	\vspace{-0.3cm}
\end{figure*}
\textbf{End-to-end Time on Real-world MoE Models.}~We measure the average iteration time for end-to-end training of the models listed in Table \ref{Table 3} on Cluster 1. The experimental results are shown in Table \ref{Table 4}. The results show that FlowMoE obtains the best scalability and the shortest per-iteration time on different sizes of clusters and models. Specifically, FlowMoE is 14\%-31\%, 13\%-25\%, 29\%-42\%, 26\%-57\% and 43\%-82\% faster than ScheMoE, FSMoE, Tutel, FasterMoE and vanillaEP, respectively. We also execute stress tests on different training frameworks across two scaled-up MoE models (LLaMA2-MoE-L and DeepSeek-V2-M) in Appendix \ref{Performance in scaled-up MoE models}. Moreover, in most cases, Tutel is faster than FasterMoE due to its dedicated optimization for MoE training. Both FlowMoE and ScheMoE are built atop Tutel and outperform Tutel. ScheMoE and FSMoE further reduce training time by optimizing scheduling within the MoE layer and pipelining intra- and inter-node communication tasks, and this strategy can also be integrated into FlowMoE. In contrast, our proposed FlowMoE maximally overlaps communication with computing by pipelining all types of tasks. In bandwidth-limited scenarios, when the number of GPUs scales up, pipeline scheduling benefits less and longer communication time dominates the whole training process, thus the advantages of FlowMoE and other state-of-the-art MoE frameworks are not obvious.
Moreover, FlowMoE essentially only changes the scheduling order of different tasks and does not affect the model convergence. We verify it through theoretical analysis and practical convergence experiments in Appendix \ref{Convergence Analysis}.
Furthermore, we discuss the performance lower bound of FlowMoE in Appendix \ref{Performance Lower Bound Analysis} and report the GPU SM utilization when training with FlowMoE in detail in Appendix \ref{GPU SM Utilization}. We evaluate the sensitivity of FlowMoE to $S_p$ selection and BO hyperparameters, while measure the computational overhead of BO in Appendix \ref{BO}. We also discuss the robustness of FlowMoE to heterogeneous clusters, dynamic hardware environments, and node dropouts in Appendix \ref{Robustness Analysis}.

\textbf{Pipelining degree v.s. Training Speed.}~Table \ref{Table 5} presents the averaged iteration time with different $R$ values on Cluster 1 with 16 GPUs. FlowMoE consistently outperforms ScheMoE and Tutel.
Prior work \citep{shi2023pipemoe} has provided an effective method to select the optimal $R$ by balancing overlapping degree and startup overhead of the tasks, which can be directly applied to FlowMoE. Orthogonal to the methods in choosing the best $ R $, our scheduling framework focuses on pipelining all computing and communication tasks efficiently for any given $R \ge 2$.

\textbf{Speedup on Customized MoE Layers.}~We construct the MoE layer using a combination of input parameters from customized MoE layer. We collect 490 valid cases from Cluster 1 with 16 GPUs and 393 cases from Cluster 2 with 8 GPUs (excluding out-of-memory cases) successfully measured on ScheMoE and FlowMoE, respectively. FlowMoE is always faster than ScheMoE in all valid cases. The speedup statistic of FlowMoE over ScheMoE is shown in Fig. \ref{Fig. 7}. On average, FlowMoE achieves 26\% improvement over ScheMoE.

\begin{table}[!t]
	\renewcommand{\arraystretch}{1.0}
	\caption{Per-iteration time of the MoE layer under different components. The reported speedup values are based on vanillaEP as the baseline. `w/ Pipe-MoE' indicates pipelining expert computing and A2A communication. `w/ Pipe-AT' indicates pipelining MHA layer computing and gating. `w/ Pipe-AR' indicates pipelining all-reduce communication.}
	\label{Table 6}
	\centering
	
	\begin{tabular}{p{81pt} >{\centering\arraybackslash}p{55pt} >{\centering\arraybackslash}p{50pt} >{\centering\arraybackslash}p{50pt} >{\centering\arraybackslash}p{45pt} >{\centering\arraybackslash}p{35pt}}
		\hline
		Name & w/ Pipe-MoE & w/ Pipe-AT & w/ Pipe-AR & Time (ms) & Speedup \\
		\hline
		vanillaEP & $ \times $ & $ \times $ & $ \times $ & 1630.8 & 1.0$\times$\\
		Tutel & $ \surd $ & $ \times $ & $ \times $ & 1115.2 & 1.46$\times$\\
		FlowMoE-AT & $ \surd $ & $ \surd $ & $ \times $ & 1012.6 & 1.61$\times$\\
		FlowMoE-AR & $ \surd $ & $ \times $ & $ \surd $ (w/o BO) & 971.5 & 1.68$\times$\\
		FlowMoE-AR(BO) & $ \surd $ & $ \times $ & $ \surd $ (w/ BO) & 895.3 & 1.82$\times$\\
		FlowMoE & $ \surd $ & $ \surd $ & $ \surd $ & 796.1 & 2.05$\times$\\
		\hline
	\end{tabular}
	\vspace{-0.3cm}
\end{table}
\textbf{Ablation Study.}~We conduct the ablation study on a customized MoE layer with $ B = 4 $, $ f = 1.2 $, $ N = 512 $, $ M = 8192 $, and $ H = 8192 $. The per-iteration time on Cluster 1 with different components using 16 GPUs is shown in Table \ref{Table 5}.
It is seen that FlowMoE-AT improve the time performance by 10.3\% over Tutel by pipelining the MHA layer computing and gating. Further, FlowMoE-AR(BO) utilizes the priority scheduling mechanism of communication tasks based on all-reduce tensor chunks to pipeline the all-reduce communication and improve the time performance by another 24.6\% over Tutel. In particular, our designed BO finds the near-optimal $ S_p $ and FlowMoE-AR(BO) obtains 8.3\% improvement in training time compared to FlowMoE-AR (where we set $ S_p $ = 1MB). Putting our two optimizations together, FlowMoE runs 1.4$\times$ and 2.05$\times$ faster than Tutel and vanillaEP, respectively.

\begin{table}[!t]
	\renewcommand{\arraystretch}{1.0}
	\fontsize{9.5pt}{\baselineskip}\selectfont
	\caption{Averaged per-worker energy consumption and memory usage in one iteration.}
	\label{Table 7}
	\centering
	
	\begin{tabular}{p{74pt} >{\centering\arraybackslash}p{49.5pt} >{\centering\arraybackslash}p{49.5pt} >{\centering\arraybackslash}p{49.5pt} >{\centering\arraybackslash}p{49.5pt} >{\centering\arraybackslash}p{53pt}}
		\hline
		Model & vanillaEP & FasterMoE & Tutel & ScheMoE & FlowMoE \\
		\hline
		GPT2-Tiny-MoE & 1.7J/2.45GB & 1.5J/2.67GB & 1.3J/2.52GB & 1.2J/2.46GB & 1.0J/2.42GB\\
		BERT-Large-MoE & 5.5J/4.19GB & 5.9J/4.97GB & 5.1J/4.13GB & 4.1J/4.16GB & 3.7J/3.89GB\\
		LLaMA2-MoE & 20.2J/12.43GB & 19.9J/16.11GB & 15.6J/11.59GB & 14.0J/11.81GB & 12.1J/11.01GB\\
		DeepSeek-V2-S & 59.5J/19.42GB & 54.9J/20.93GB & 45.6J/19.34GB & 41.7J/18.92GB & 34.9J/17.57GB\\
		\hline
	\end{tabular}
	\vspace{-0.3cm}
\end{table}
\textbf{Energy Consumption.}~We use the NVIDIA SMI tool to sample the real-time power of each GPU in the cluster every 5ms, then integrate these sampled data over time to calculate the energy consumption of each GPU during the training process. We sum the energy consumption of all GPUs to obtain the total energy consumption, then divide it by the number of GPUs and the number of iterations to obtain the average per-worker energy consumption as shown in Table \ref{Table 7}. The results indicate that a higher overlapping degree of computing and communication tasks leads to higher computing and communication resource utilization, thus exhibiting lower energy consumption. Overall, FlowMoE saves 10\%-16\%, 22\%-27\%, 33\%-39\% and 33\%-41\% energy consumption compared to ScheMoE, Tutel, FasterMoE and vanillaEP, respectively. It is worth noting that the power reported by the NVIDIA SMI tool includes both communication and computing costs. According to the official documentation of NVIDIA SMI \citep{nvidia-smi}, the energy consumption measured by the NVIDIA SMI tool covers the entire GPU card's power consumption, where communication power includes (1) PCIe communication costs, (2) NVLink/NVSwitch communication costs, and (3) collective communication costs (e.g., NCCL All-Reduce and A2A). Therefore, our measurement results effectively reflect the total cluster-wide energy usage, compatibly providing a fair comparison with the baselines.
%

\textbf{Memory Usage.}~Table \ref{Fig. 7} also demonstrates average memory usage per worker by monitoring all GPUs every 1s on Cluster 1 with 16 GPUs using the NVIDIA SMI tool. FlowMoE has the lowest memory usage because it performs all-reduce communication tasks in time to reduce gradient caching. Specifically, FlowMoE saves up to 7\%, 9\%, 32\% and 11\% of memory compared to ScheMoE, Tutel, FasterMoE and vanillaEP, respectively. In addition, the memory usage of ScheMoE and Tutel is similar to that of vanillaEP since there is no optimization in the memory. FasterMoE needs to replicate expert parameters among workers in its load balancing mechanism and consumes more GPU memory.

\vspace{-0.3cm}
\section{Related Work}
\vspace{-0.2cm}
\label{Section 6}
In addition to the frameworks mentioned in Section \ref{Section 2} that focus on computing-communication pipelines in distributed training, we also discuss some other works. First, general pipeline schedulers such as PipeDream \citep{narayanan2019pipedream} and Gpipe \citep{huang2019gpipe} enable pipelining by splitting the model across multiple GPUs, requiring communication between GPUs to exchange activation values or gradients across split layers. In contrast, FlowMoE implements pipelining of computing-communication tasks, accelerating model training by minimizing communication overhead. Second, several orthogonal MoE optimization techniques can be combined with FlowMoE. (1) NetMoE \citep{liu2025netmoe} alleviates communication bottlenecks and load imbalance by reducing cross-node token routing and dynamically adjusting expert placement. (2) Lancet \citep{jiang2024lancet} determines task scheduling order by constructing a computing-communication directed acyclic graph (DAG), without considering microbatch-level partitioning of the same task. (3) Punniyamurthy et al.'s work \citep{punniyamurthy2024optimizing} focuses on fusing computing-communication operations and reducing kernel-level startup overhead. However, it does not consider scheduling relationships among all major MoE tasks or how to maximize computing-communication overlap.

\vspace{-0.3cm}
\section{Conclusion}
\vspace{-0.2cm}
\label{Section 7}
In this paper, we propose a scalable and mathematically proven pipeline scheduling framework called FlowMoE to accelerate the training of MoE models. FlowMoE addresses the unified scheduling across all major MoE-related tasks and enables the optimal coexistence of heterogeneous communication tasks (A2A and all-reduce communications) in MoE training. This substantially advances distributed solutions beyond a simple extension of traditional pipeline methods. We implement the FlowMoE framework on Pytorch. We conduct extensive experiments with 675 typical MoE layers and four real-world NLP models across two GPU clusters. Experimental results show that FlowMoE outperforms state-of-the-art MoE training frameworks including ScheMoE, FSMoE, Tutel, and FasterMoE by 13\%-57\% in training time, 10\%-39\% in energy consumption, and 7\%-32\% in memory usage. Furthermore, FlowMoE can be combined with many orthogonal optimization works, promoting distributed pipeline optimization to a new stage.

\vspace{-0.3cm}
\section{Acknowledgements}
\vspace{-0.2cm}
This work was supported in part by the National Key Research and Development Project under Grant 2022YFB2901604.

\bibliographystyle{unsrtnat}
\bibliography{mybibfile.bib}

\newpage
\section*{NeurIPS Paper Checklist}

\begin{enumerate}
	
	\item {\bf Claims}
	\item[] Question: Do the main claims made in the abstract and introduction accurately reflect the paper's contributions and scope?
	\item[] Answer: \answerYes{} 
	\item[] Justification: We describe our scope and contributions in the abstract and introduction.
	\item[] Guidelines:
	\begin{itemize}
		\item The answer NA means that the abstract and introduction do not include the claims made in the paper.
		\item The abstract and/or introduction should clearly state the claims made, including the contributions made in the paper and important assumptions and limitations. A No or NA answer to this question will not be perceived well by the reviewers. 
		\item The claims made should match theoretical and experimental results, and reflect how much the results can be expected to generalize to other settings. 
		\item It is fine to include aspirational goals as motivation as long as it is clear that these goals are not attained by the paper. 
	\end{itemize}
	
	\item {\bf Limitations}
	\item[] Question: Does the paper discuss the limitations of the work performed by the authors?
	\item[] Answer: \answerYes{} 
	\item[] Justification: We analyze the limitations of our approach based on the experimental results in Section 5.2.
	\item[] Guidelines:
	\begin{itemize}
		\item The answer NA means that the paper has no limitation while the answer No means that the paper has limitations, but those are not discussed in the paper. 
		\item The authors are encouraged to create a separate "Limitations" section in their paper.
		\item The paper should point out any strong assumptions and how robust the results are to violations of these assumptions (e.g., independence assumptions, noiseless settings, model well-specification, asymptotic approximations only holding locally). The authors should reflect on how these assumptions might be violated in practice and what the implications would be.
		\item The authors should reflect on the scope of the claims made, e.g., if the approach was only tested on a few datasets or with a few runs. In general, empirical results often depend on implicit assumptions, which should be articulated.
		\item The authors should reflect on the factors that influence the performance of the approach. For example, a facial recognition algorithm may perform poorly when image resolution is low or images are taken in low lighting. Or a speech-to-text system might not be used reliably to provide closed captions for online lectures because it fails to handle technical jargon.
		\item The authors should discuss the computational efficiency of the proposed algorithms and how they scale with dataset size.
		\item If applicable, the authors should discuss possible limitations of their approach to address problems of privacy and fairness.
		\item While the authors might fear that complete honesty about limitations might be used by reviewers as grounds for rejection, a worse outcome might be that reviewers discover limitations that aren't acknowledged in the paper. The authors should use their best judgment and recognize that individual actions in favor of transparency play an important role in developing norms that preserve the integrity of the community. Reviewers will be specifically instructed to not penalize honesty concerning limitations.
	\end{itemize}
	
	\item {\bf Theory assumptions and proofs}
	\item[] Question: For each theoretical result, does the paper provide the full set of assumptions and a complete (and correct) proof?
	\item[] Answer: \answerYes{} 
	\item[] Justification: We describe the theory assumptions and proofs in Section 3.3, Section 4.1, Appendix B and Appendix C.
	\item[] Guidelines:
	\begin{itemize}
		\item The answer NA means that the paper does not include theoretical results. 
		\item All the theorems, formulas, and proofs in the paper should be numbered and cross-referenced.
		\item All assumptions should be clearly stated or referenced in the statement of any theorems.
		\item The proofs can either appear in the main paper or the supplemental material, but if they appear in the supplemental material, the authors are encouraged to provide a short proof sketch to provide intuition. 
		\item Inversely, any informal proof provided in the core of the paper should be complemented by formal proofs provided in appendix or supplemental material.
		\item Theorems and Lemmas that the proof relies upon should be properly referenced. 
	\end{itemize}
	
	\item {\bf Experimental result reproducibility}
	\item[] Question: Does the paper fully disclose all the information needed to reproduce the main experimental results of the paper to the extent that it affects the main claims and/or conclusions of the paper (regardless of whether the code and data are provided or not)?
	\item[] Answer: \answerYes{} 
	\item[] Justification: We clearly and fully present the selected models and benchmarks in Section 5 and also describe some parameter settings in Appendix D.1.
	\item[] Guidelines:
	\begin{itemize}
		\item The answer NA means that the paper does not include experiments.
		\item If the paper includes experiments, a No answer to this question will not be perceived well by the reviewers: Making the paper reproducible is important, regardless of whether the code and data are provided or not.
		\item If the contribution is a dataset and/or model, the authors should describe the steps taken to make their results reproducible or verifiable. 
		\item Depending on the contribution, reproducibility can be accomplished in various ways. For example, if the contribution is a novel architecture, describing the architecture fully might suffice, or if the contribution is a specific model and empirical evaluation, it may be necessary to either make it possible for others to replicate the model with the same dataset, or provide access to the model. In general. releasing code and data is often one good way to accomplish this, but reproducibility can also be provided via detailed instructions for how to replicate the results, access to a hosted model (e.g., in the case of a large language model), releasing of a model checkpoint, or other means that are appropriate to the research performed.
		\item While NeurIPS does not require releasing code, the conference does require all submissions to provide some reasonable avenue for reproducibility, which may depend on the nature of the contribution. For example
		\begin{enumerate}
			\item If the contribution is primarily a new algorithm, the paper should make it clear how to reproduce that algorithm.
			\item If the contribution is primarily a new model architecture, the paper should describe the architecture clearly and fully.
			\item If the contribution is a new model (e.g., a large language model), then there should either be a way to access this model for reproducing the results or a way to reproduce the model (e.g., with an open-source dataset or instructions for how to construct the dataset).
			\item We recognize that reproducibility may be tricky in some cases, in which case authors are welcome to describe the particular way they provide for reproducibility. In the case of closed-source models, it may be that access to the model is limited in some way (e.g., to registered users), but it should be possible for other researchers to have some path to reproducing or verifying the results.
		\end{enumerate}
	\end{itemize}

	\item {\bf Open access to data and code}
	\item[] Question: Does the paper provide open access to the data and code, with sufficient instructions to faithfully reproduce the main experimental results, as described in supplemental material?
	\item[] Answer: \answerYes{} 
	\item[] Justification: We have provided the URL of our code in abstract.
	\item[] Guidelines:
	\begin{itemize}
		\item The answer NA means that paper does not include experiments requiring code.
		\item Please see the NeurIPS code and data submission guidelines (\url{https://nips.cc/public/guides/CodeSubmissionPolicy}) for more details.
		\item While we encourage the release of code and data, we understand that this might not be possible, so “No” is an acceptable answer. Papers cannot be rejected simply for not including code, unless this is central to the contribution (e.g., for a new open-source benchmark).
		\item The instructions should contain the exact command and environment needed to run to reproduce the results. See the NeurIPS code and data submission guidelines (\url{https://nips.cc/public/guides/CodeSubmissionPolicy}) for more details.
		\item The authors should provide instructions on data access and preparation, including how to access the raw data, preprocessed data, intermediate data, and generated data, etc.
		\item The authors should provide scripts to reproduce all experimental results for the new proposed method and baselines. If only a subset of experiments are reproducible, they should state which ones are omitted from the script and why.
		\item At submission time, to preserve anonymity, the authors should release anonymized versions (if applicable).
		\item Providing as much information as possible in supplemental material (appended to the paper) is recommended, but including URLs to data and code is permitted.
	\end{itemize}

	\item {\bf Experimental setting/details}
	\item[] Question: Does the paper specify all the training and test details (e.g., data splits, hyperparameters, how they were chosen, type of optimizer, etc.) necessary to understand the results?
	\item[] Answer: \answerYes{} 
	\item[] Justification: All experimental settings, benchmark and training details can be found in Section 5, and Appendix.
	\item[] Guidelines:
	\begin{itemize}
		\item The answer NA means that the paper does not include experiments.
		\item The experimental setting should be presented in the core of the paper to a level of detail that is necessary to appreciate the results and make sense of them.
		\item The full details can be provided either with the code, in appendix, or as supplemental material.
	\end{itemize}
	
	\item {\bf Experiment statistical significance}
	\item[] Question: Does the paper report error bars suitably and correctly defined or other appropriate information about the statistical significance of the experiments?
	\item[] Answer: \answerYes{} 
	\item[] Justification: All our experimental results come from the average of 1000 training iterations.
	\item[] Guidelines:
	\begin{itemize}
		\item The answer NA means that the paper does not include experiments.
		\item The authors should answer "Yes" if the results are accompanied by error bars, confidence intervals, or statistical significance tests, at least for the experiments that support the main claims of the paper.
		\item The factors of variability that the error bars are capturing should be clearly stated (for example, train/test split, initialization, random drawing of some parameter, or overall run with given experimental conditions).
		\item The method for calculating the error bars should be explained (closed form formula, call to a library function, bootstrap, etc.)
		\item The assumptions made should be given (e.g., Normally distributed errors).
		\item It should be clear whether the error bar is the standard deviation or the standard error of the mean.
		\item It is OK to report 1-sigma error bars, but one should state it. The authors should preferably report a 2-sigma error bar than state that they have a 96\% CI, if the hypothesis of Normality of errors is not verified.
		\item For asymmetric distributions, the authors should be careful not to show in tables or figures symmetric error bars that would yield results that are out of range (e.g. negative error rates).
		\item If error bars are reported in tables or plots, The authors should explain in the text how they were calculated and reference the corresponding figures or tables in the text.
	\end{itemize}
	
	\item {\bf Experiments compute resources}
	\item[] Question: For each experiment, does the paper provide sufficient information on the computer resources (type of compute workers, memory, time of execution) needed to reproduce the experiments?
	\item[] Answer: \answerYes{} 
	\item[] Justification:  Experiments compute resources are reported in Section 5.1.
	\item[] Guidelines:
	\begin{itemize}
		\item The answer NA means that the paper does not include experiments.
		\item The paper should indicate the type of compute workers CPU or GPU, internal cluster, or cloud provider, including relevant memory and storage.
		\item The paper should provide the amount of compute required for each of the individual experimental runs as well as estimate the total compute. 
		\item The paper should disclose whether the full research project required more compute than the experiments reported in the paper (e.g., preliminary or failed experiments that didn't make it into the paper). 
	\end{itemize}
	
	\item {\bf Code of ethics}
	\item[] Question: Does the research conducted in the paper conform, in every respect, with the NeurIPS Code of Ethics \url{https://neurips.cc/public/EthicsGuidelines}?
	\item[] Answer: \answerYes{} 
	\item[] Justification: We fully comply with the NeurIPS Code of Ethics.
	\item[] Guidelines:
	\begin{itemize}
		\item The answer NA means that the authors have not reviewed the NeurIPS Code of Ethics.
		\item If the authors answer No, they should explain the special circumstances that require a deviation from the Code of Ethics.
		\item The authors should make sure to preserve anonymity (e.g., if there is a special consideration due to laws or regulations in their jurisdiction).
	\end{itemize}

	\item {\bf Broader impacts}
	\item[] Question: Does the paper discuss both potential positive societal impacts and negative societal impacts of the work performed?
	\item[] Answer: \answerNA{} 
	\item[] Justification: The proposed FlowMoE is a training framework for LLM, thus, has no negative societal impact.
	\item[] Guidelines:
	\begin{itemize}
		\item The answer NA means that there is no societal impact of the work performed.
		\item If the authors answer NA or No, they should explain why their work has no societal impact or why the paper does not address societal impact.
		\item Examples of negative societal impacts include potential malicious or unintended uses (e.g., disinformation, generating fake profiles, surveillance), fairness considerations (e.g., deployment of technologies that could make decisions that unfairly impact specific groups), privacy considerations, and security considerations.
		\item The conference expects that many papers will be foundational research and not tied to particular applications, let alone deployments. However, if there is a direct path to any negative applications, the authors should point it out. For example, it is legitimate to point out that an improvement in the quality of generative models could be used to generate deepfakes for disinformation. On the other hand, it is not needed to point out that a generic algorithm for optimizing neural networks could enable people to train models that generate Deepfakes faster.
		\item The authors should consider possible harms that could arise when the technology is being used as intended and functioning correctly, harms that could arise when the technology is being used as intended but gives incorrect results, and harms following from (intentional or unintentional) misuse of the technology.
		\item If there are negative societal impacts, the authors could also discuss possible mitigation strategies (e.g., gated release of models, providing defenses in addition to attacks, mechanisms for monitoring misuse, mechanisms to monitor how a system learns from feedback over time, improving the efficiency and accessibility of ML).
	\end{itemize}
	
	\item {\bf Safeguards}
	\item[] Question: Does the paper describe safeguards that have been put in place for responsible release of data or models that have a high risk for misuse (e.g., pretrained language models, image generators, or scraped datasets)?
	\item[] Answer: \answerNA{} 
	\item[] Justification: This paper poses no such risks.
	\item[] Guidelines:
	\begin{itemize}
		\item The answer NA means that the paper poses no such risks.
		\item Released models that have a high risk for misuse or dual-use should be released with necessary safeguards to allow for controlled use of the model, for example by requiring that users adhere to usage guidelines or restrictions to access the model or implementing safety filters. 
		\item Datasets that have been scraped from the Internet could pose safety risks. The authors should describe how they avoided releasing unsafe images.
		\item We recognize that providing effective safeguards is challenging, and many papers do not require this, but we encourage authors to take this into account and make a best faith effort.
	\end{itemize}
	
	\item {\bf Licenses for existing assets}
	\item[] Question: Are the creators or original owners of assets (e.g., code, data, models), used in the paper, properly credited and are the license and terms of use explicitly mentioned and properly respected?
	\item[] Answer: \answerYes{} 
	\item[] Justification: We adhere to the usage restrictions of all assets and correctly cite their sources.
	\item[] Guidelines:
	\begin{itemize}
		\item The answer NA means that the paper does not use existing assets.
		\item The authors should cite the original paper that produced the code package or dataset.
		\item The authors should state which version of the asset is used and, if possible, include a URL.
		\item The name of the license (e.g., CC-BY 4.0) should be included for each asset.
		\item For scraped data from a particular source (e.g., website), the copyright and terms of service of that source should be provided.
		\item If assets are released, the license, copyright information, and terms of use in the package should be provided. For popular datasets, \url{paperswithcode.com/datasets} has curated licenses for some datasets. Their licensing guide can help determine the license of a dataset.
		\item For existing datasets that are re-packaged, both the original license and the license of the derived asset (if it has changed) should be provided.
		\item If this information is not available online, the authors are encouraged to reach out to the asset's creators.
	\end{itemize}
	
	\item {\bf New assets}
	\item[] Question: Are new assets introduced in the paper well documented and is the documentation provided alongside the assets?
	\item[] Answer: \answerNA{} 
	\item[] Justification: We do not release new assets now, and the code will be open-sourced after the paper is accepted.
	\item[] Guidelines:
	\begin{itemize}
		\item The answer NA means that the paper does not release new assets.
		\item Researchers should communicate the details of the dataset/code/model as part of their submissions via structured templates. This includes details about training, license, limitations, etc. 
		\item The paper should discuss whether and how consent was obtained from people whose asset is used.
		\item At submission time, remember to anonymize your assets (if applicable). You can either create an anonymized URL or include an anonymized zip file.
	\end{itemize}
	
	\item {\bf Crowdsourcing and research with human subjects}
	\item[] Question: For crowdsourcing experiments and research with human subjects, does the paper include the full text of instructions given to participants and screenshots, if applicable, as well as details about compensation (if any)? 
	\item[] Answer: \answerNA{} 
	\item[] Justification: We do not involve any crowdsourcing or research with human subjects.
	\item[] Guidelines:
	\begin{itemize}
		\item The answer NA means that the paper does not involve crowdsourcing nor research with human subjects.
		\item Including this information in the supplemental material is fine, but if the main contribution of the paper involves human subjects, then as much detail as possible should be included in the main paper. 
		\item According to the NeurIPS Code of Ethics, workers involved in data collection, curation, or other labor should be paid at least the minimum wage in the country of the data collector. 
	\end{itemize}
	
	\item {\bf Institutional review board (IRB) approvals or equivalent for research with human subjects}
	\item[] Question: Does the paper describe potential risks incurred by study participants, whether such risks were disclosed to the subjects, and whether Institutional Review Board (IRB) approvals (or an equivalent approval/review based on the requirements of your country or institution) were obtained?
	\item[] Answer: \answerNA{} 
	\item[] Justification: We do not involve any crowdsourcing or research with human subjects.
	\item[] Guidelines:
	\begin{itemize}
		\item The answer NA means that the paper does not involve crowdsourcing nor research with human subjects.
		\item Depending on the country in which research is conducted, IRB approval (or equivalent) may be required for any human subjects research. If you obtained IRB approval, you should clearly state this in the paper. 
		\item We recognize that the procedures for this may vary significantly between institutions and locations, and we expect authors to adhere to the NeurIPS Code of Ethics and the guidelines for their institution. 
		\item For initial submissions, do not include any information that would break anonymity (if applicable), such as the institution conducting the review.
	\end{itemize}
	
	\item {\bf Declaration of LLM usage}
	\item[] Question: Does the paper describe the usage of LLMs if it is an important, original, or non-standard component of the core methods in this research? Note that if the LLM is used only for writing, editing, or formatting purposes and does not impact the core methodology, scientific rigorousness, or originality of the research, declaration is not required.
	\item[] Answer: \answerNA{} 
	\item[] Justification: LLMs are not used in any part of the core methodology or scientific content of  this work.
	\item[] Guidelines:
	\begin{itemize}
		\item The answer NA means that the core method development in this research does not involve LLMs as any important, original, or non-standard components.
		\item Please refer to our LLM policy (\url{https://neurips.cc/Conferences/2025/LLM}) for what should or should not be described.
	\end{itemize}
	
\end{enumerate}

\newpage
\appendix
\renewcommand{\theequation}{A.\arabic{equation}}
\setcounter{equation}{0}

\renewcommand{\thetable}{A.\arabic{table}}
\setcounter{table}{0}

\renewcommand{\thefigure}{A.\arabic{figure}}
\setcounter{figure}{0}
\section*{\centering \large Appendix}
\vspace{-0.2cm}
\section*{\centering \large FlowMoE: A Scalable Pipeline Scheduling Framework for Distributed Mixture-of-Experts Training}
\begin{table}[H]
	\renewcommand{\arraystretch}{1.0}
	\caption{Frequently used notations}
	\label{Table notation}
	\centering
	
	\begin{tabular}{p{40pt} p{305pt}}
		\hline
		Name & Description\\
		\hline
		$ P $ & Number of workers (or GPUs) in the cluster.\\
		$ L $ & Number of transformer blocks in a MoE model.\\
		$ B $ & Number of samples per GPU (or mini-batch size) in one iteration.\\
		$ N $ & Number of tokens per sample.\\
		$ M $ & Embedding size of a token.\\
		$ E $ & Total number of experts per MoE layer.\\
		$ H $ & Hidden size of the feed-forward layer in experts.\\
		$ k $ & Top-$k$ experts should be selected for each token.\\
		$ f $ & Capability factor that determines the maximum number of tokens assigned to each expert.\\
		$ R $ & Pipelining degree.\\
		$ S_p $ & Partition size of the all-reduce tensor block.\\
		\hline
		$ AT_{r}^{(l)} $ & The $r$th MHA layer computing subtask of the $l$th transformer block (including the $r$th gating subtask)\\
		$ E_{r}^{(l)} $ & The $r$th expert computing subtask of the $l$th transformer block.\\
		$ D_{r}^{(l)} $ & The $r$th dispatch subtask (A2A communication) of the $l$th transformer block.\\
		$ C_{r}^{(l)} $ & The $r$th combining subtask (A2A communication) of the $l$th transformer block.\\
		$ AR^{(l)} $ & The all-reduce communication task of the $l$th transformer block during the backward propagation.\\
		\hline
		$ \tau_f(\cdot)/\tau_b(\cdot) $ & The beginning execution timestamp of a task during the feed-forward computing/backward propagation.\\
		$ t_f(\cdot)/t_b(\cdot) $ & The elapsed time of a task during the feed-forward computing/backward propagation.\\
		\hline
	\end{tabular}
\end{table}

\section{Main differences between FlowMoE and the key literature}
\label{Main differences}
\begin{table}[H]
	\renewcommand{\arraystretch}{1.0}
	\caption{\centering Main differences between FlowMoE and the key literature on pipeline scheduling for distributed MoE training, where BO represents the Bayesian optimization.}
	\label{Table Main Differences}
	\newcommand{\tabincell}[2]{\begin{tabular}{@{}#1@{}}#2\end{tabular}}
	\centering
	
	\begin{tabular}{p{113pt} >{\centering\arraybackslash}p{47pt} >{\centering\arraybackslash}p{52pt} >{\centering\arraybackslash}p{29pt} >{\centering\arraybackslash}p{48pt} >{\centering\arraybackslash}p{34pt}}
		\hline
		Scheduling framework & vanillaEP\citep{he2021fastmoe} & FasterMoE\citep{he2022fastermoe} & Tutel\citep{hwang2023tutel} & ScheMoE\citep{shi2024schemoe} & FlowMoE\\
		\hline
		A2A pipelining & $\times$ & $\surd$ & $\surd$ & $\surd$ & $\surd$ \\
		Expert computing pipelining & $\times$ & $\surd$ & $\surd$ & $\surd$ & $\surd$ \\
		MHA + gating pipelining & $\times$ & $\times$ & $\times$ & $\times$ & $\surd$ \\
		All-reduce pipelining & $\times$ & $\times$ & $\times$ & $\times$ & $\surd$ \\
		Auto-tuning & $\times$ & $\times$ & $\times$ & $\times$ & $\surd$ (BO) \\
		\makecell[l]{Robustness of dynamic\\hardware environment} & Weak & Weak & Weak & Weak & Strong \\
		\makecell[l]{Speedup (Homogeneous\\16-GPU cluster)} & 1.0$\times$ & 1.28$\times$ & 1.31$\times$ & 1.45$\times$ & 1.82$\times$ \\
		\makecell[l]{Speedup (Heterogeneous\\16-GPU cluster)} & 1.0$\times$ & 1.31$\times$ & 1.36$\times$ & 1.44$\times$ & 1.74$\times$ \\
		\hline
	\end{tabular}
\end{table}

\section{Proof of Theorem 1}
\label{Proof of Theorem 1}
According to Eqs. \ref{5} and \ref{6}, since the task scheduling timeline in each transformer block is the same and the all-reduce communication time of each transformer block is also the same, we only need to prove that $ T_b $ when inserting $ AR^{(l+1)} $ between any two A2A communication tasks of transformer block $ l $ will be less than or equal to $ T^{\ast}_b $.

We discuss the relationship between $ T_b $ and $ T^{\ast}_b $ under the following four tentative scheduling orders ($ 1 < l< L $, $ 1\le r< R $):

Scheduling order 1: $ \tau_b(C_{r+1}^{(l)}) \le\tau_b(AR^{(l+1)}) \le\tau_b(C_{r}^{(l)}) $.

Scheduling order 2: $ \tau_b(C_{1}^{(l)}) \le\tau_b(AR^{(l+1)}) \le\tau_b(D_{R}^{(l)}) $.

Scheduling order 3: $ \tau_b(D_{r+1}^{(l)}) \le\tau_b(AR^{(l+1)}) \le\tau_b(D_{r}^{(l)}) $.

Scheduling order 4: $ \tau_b(D_{1}^{(l)}) \le\tau_b(AR^{(l+1)}) \le\tau_b(C_{R}^{(l-1)}) $.

If Scheduling order 1 holds, according to Eq. \ref{9}, we have
\begin{equation}
	\label{13}
	\tau_b(E_r^{(l)})=\max\{\tau_b(C_r^{(l)})+t_b(C_r^{(l)})+t_b(AR^{(l+1)}), \tau_b(E_{(r+1)}^{(l)})+t_b(E_{(r+1)}^{(l)})\}.
\end{equation}
And if $ AR^{(l+1)} $ is performed at the end of backward propagation, according to Eq. \ref{9}, we have
\begin{equation}
	\label{14}
	\tau_b^{\ast}(E_r^{(l)})=\max\{\tau^{\ast}_b(C_r^{(l)})+t_b(C_r^{(l)}), \tau^{\ast}_b(E_{(r+1)}^{(l)})+t_b(E_{(r+1)}^{(l)})\}.
\end{equation}
Since $ \tau^{\ast}_b(C_r^{(l)})=\tau_b(C_r^{(l)}) $, according to Eqs. \ref{13} and \ref{14}, we can derive 
\begin{equation}
	\label{15}
	\tau_b(E_r^{(l)}) \le \tau_b^{\ast}(E_r^{(l)}) + t_b(AR^{(l+1)}).
\end{equation}
Meanwhile, according Eqs. \ref{5}, \ref{6} and \ref{8}-\ref{12}, we have
\begin{equation}
	\label{16}
	\tau_b(AR^{(1)})+t_b(AR^{(1)})-\tau_b(E_r^{(l)})+t_b(AR^{(l+1)}) = \tau^\ast_b(AR^{(1)})+t_b(AR^{(1)})-\tau^\ast_b(E_r^{(l)}).
\end{equation}
In other words, between task $ AR^{(1)} $ and task $ E_r^{(l)} $, the timeline based on centralized scheduling of all-reduce communication tasks has one more $ AR^{(l+1)} $ than the timeline based on Scheduling order 1. Then, adding the left and right terms of Eqs. \ref{15} and \ref{16}, respectively, we have
\begin{equation}
	\label{17}
	\tau_b(AR^{(1)})+t_b(AR^{(1)})\le \tau^\ast_b(AR^{(1)})+t_b(AR^{(1)}).
\end{equation}
Moreover, since $ \tau_b(C_R^{(L)})=\tau^\ast _b(C_R^{(L)}) $, according to Eq. \ref{7}, we have $ T_b \le T^{\ast}_b $. An easy-to-understand demonstration is presented in Fig. \ref{appendx 1}.

\begin{figure}[H]
	\captionsetup[subfigure]{justification=centering}
	\centering
	\begin{minipage}{\linewidth}
		\centering
		\begin{subfigure}{0.35\textwidth}
			\includegraphics[width=\linewidth]{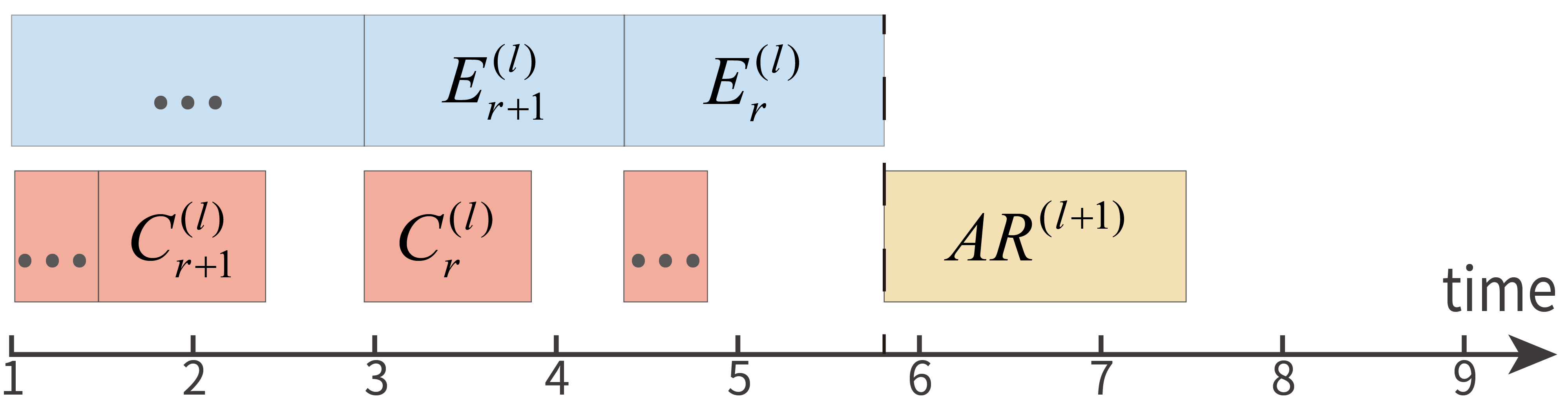}
			\caption{}
			\label{appendx 1a}
		\end{subfigure}
		\centering
		\begin{subfigure}{0.35\textwidth}
			\includegraphics[width=\linewidth]{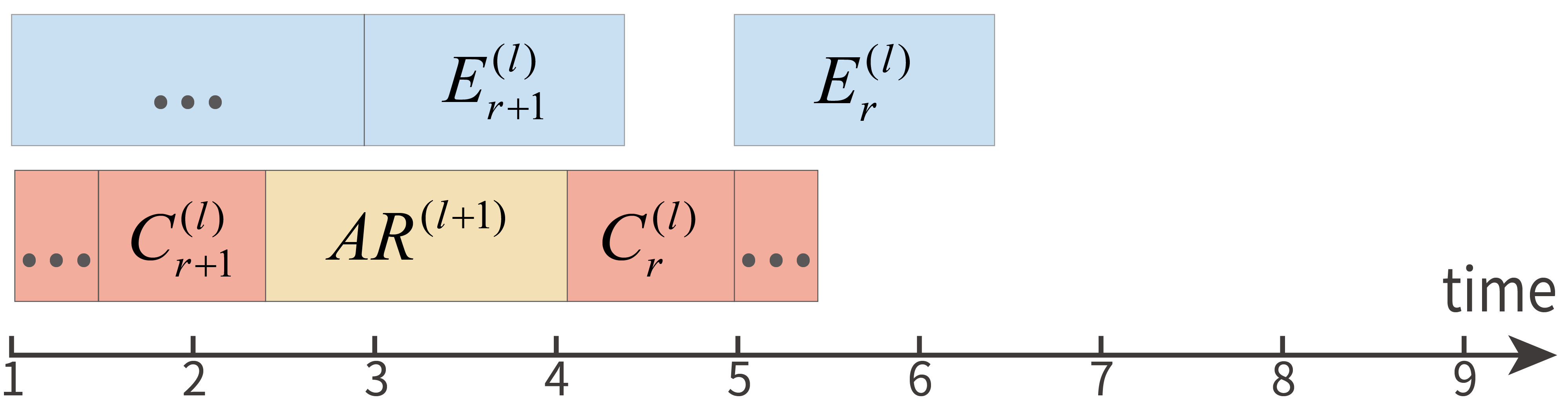}
			\caption{}
			\label{appendx 1b}
		\end{subfigure}
		\centering
		\begin{subfigure}{0.35\textwidth}
			\includegraphics[width=\linewidth]{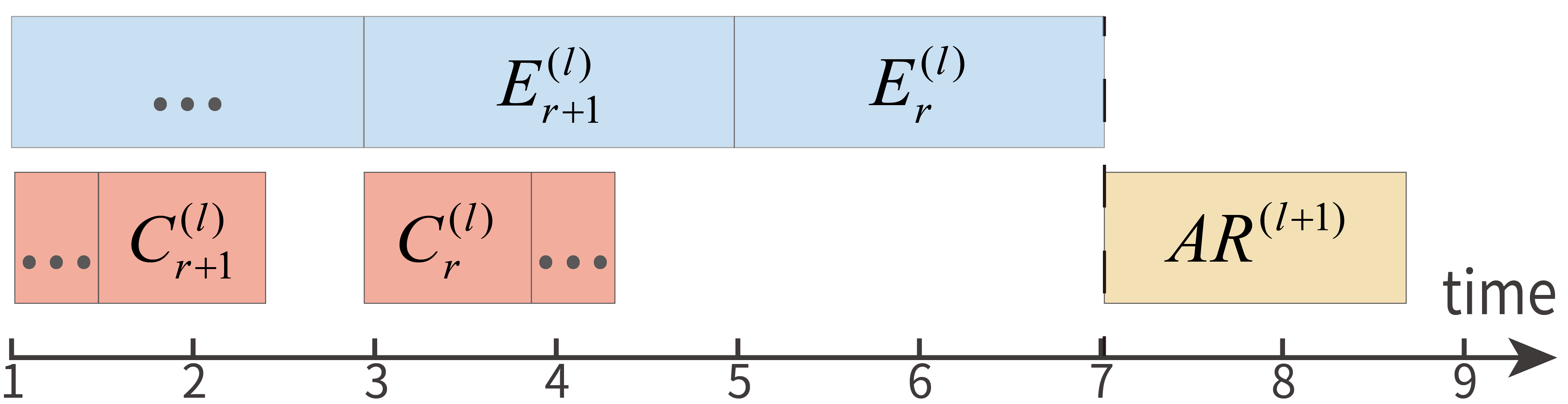}
			\caption{}
			\label{appendx 1c}
		\end{subfigure}
		\centering
		\begin{subfigure}{0.35\textwidth}
			\includegraphics[width=\linewidth]{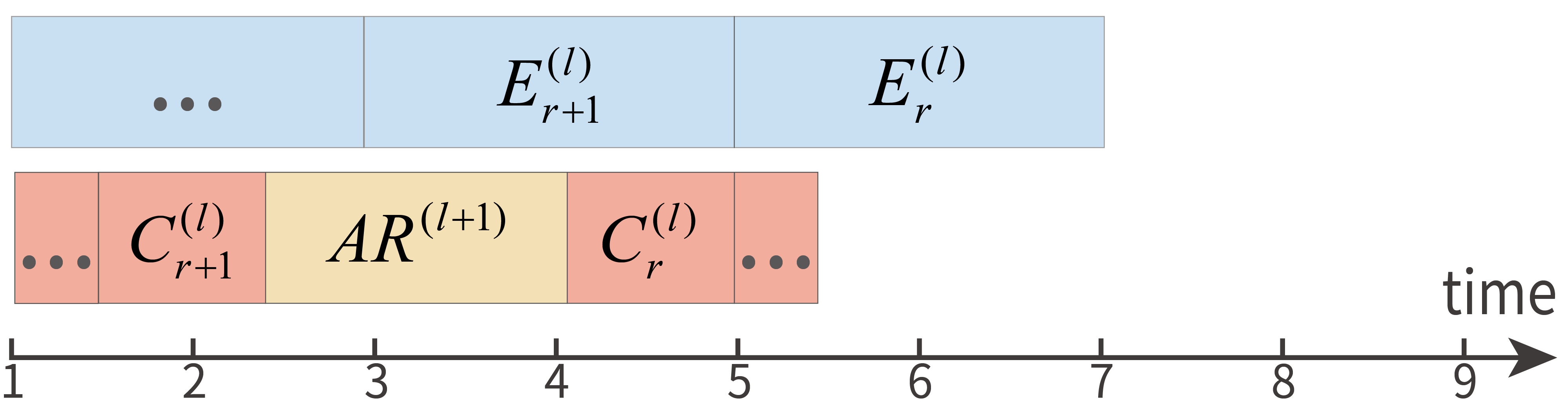}
			\caption{}
			\label{appendx 1d}
		\end{subfigure}
		\centering
		\begin{subfigure}{0.35\textwidth}
			\includegraphics[width=\linewidth]{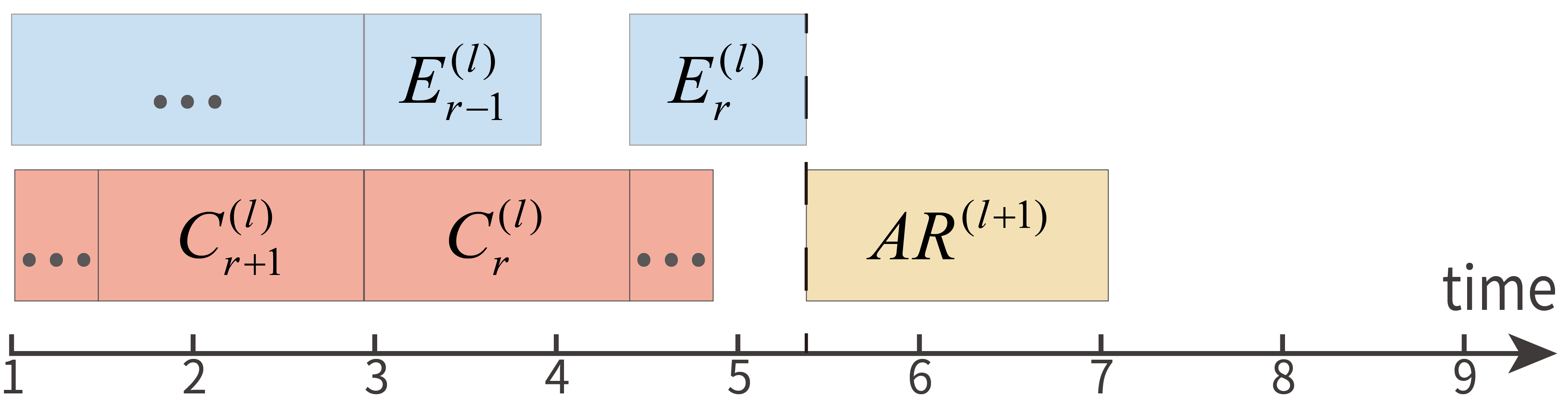}
			\caption{}
			\label{appendx 1e}
		\end{subfigure}
		\centering
		\begin{subfigure}{0.35\textwidth}
			\includegraphics[width=\linewidth]{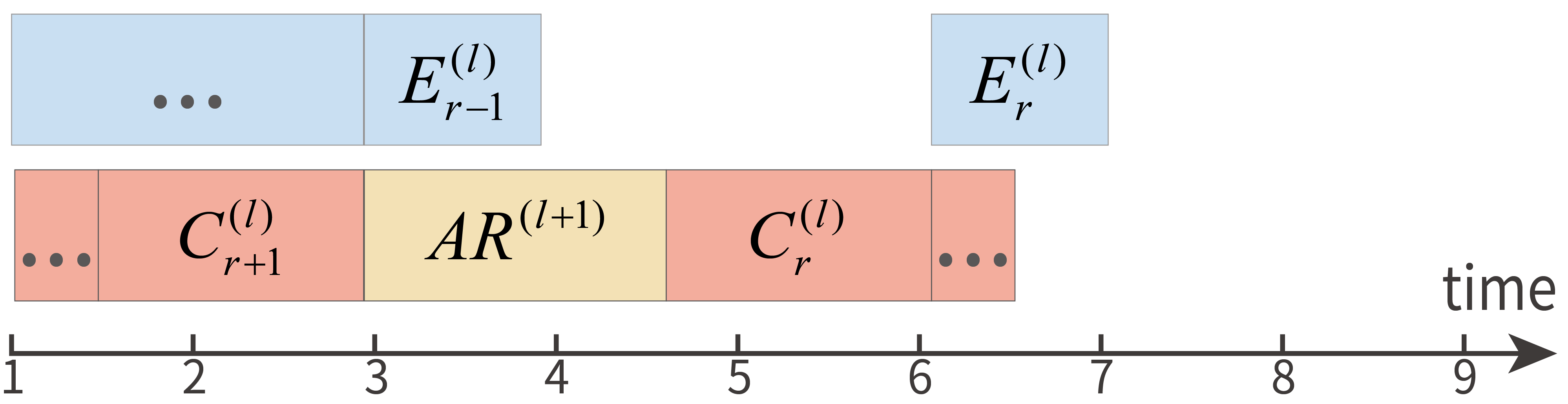}
			\caption{}
			\label{appendx 1f}
		\end{subfigure}
		\caption{Demonstration of one example where (a), (c) and (e) represent the three timelines when using centralized scheduling of all-reduce communication tasks (the dotted lines indicate that the tasks before and after do not have direct dependency), and (b), (d) and (f) represent the three timelines when using Scheduling order 1. Comparing (a) with (b), (c) with (d) and (e) with (f), it can be observed that $ T_b \le T^{\ast}_b $.}
		\label{appendx 1}
	\end{minipage}
\end{figure}
Similarly, we can obtain the same conclusion when using the tentative scheduling orders 2,3 and 4. Therefore, $ T_b $ will be less than or equal to $ T^{\ast}_b $ when inserting all-reduce communication task of one transformer block between any two A2A communication tasks, and the proof is complete.

\section{Proof of Theorem 2}
\label{Proof of Theorem 2}
When the scheduling order satisfies Eqs. \ref{5} and \ref{6} during backward propagation, the idle time of the communication resources (i.e., the gaps between the A2A communication tasks) is fixed if not considering the all-reduce communication tasks. Obviously, the approach to minimize the time of one iteration is to maximize the occupation of the idle time of the communication resources by the communication tasks of the all-reduce tensor chunks. Furthermore, if the communication of the all-reduce tensor chunks does not introduce extra startup overhead, the total time of the all-reduce communication tasks will remain unchanged for different $ S_p $. When $ S_p $ is larger, A2A communication tasks may not be executed in time due to uncompleted communication tasks of the all-reduce tensor chunks, which will cause the timeline of the computing tasks to shift back according to Eqs. \ref{9} and \ref{11} and part of the communication tasks of the all-reduce tensor chunks do not occupy the free time of the communication resource (see Figs. \ref{appendx 1b} and \ref{appendx 1f}). On the contrary, when $ S_p\to 0 $, the communication tasks of the all-reduce tensor chunks will not affect the timeline of the A2A communication tasks and thus will not change the timeline of the computing tasks due to the higher priority of the A2A communication tasks. Meanwhile, it will maximize the free time of the communication resources. Thereby, the proof is complete.

\section{Design and Performance Evaluation of BO}
\label{BO}
\subsection{Parameter Settings and Advantage Analysis}
We choose BO to obtain a near-optimal $ S_p $ benefiting from the following three points:
\begin{itemize}
	\item First, BO is not limited by the expression of the objective function (we use $ \mathcal{F}(S_p) $ as the objective function) and depends only on the sampling values obtained (i.e., $ \hat{\mathcal{F}}(S_p^1),\hat{\mathcal{F}}(S_p^2),...,\hat{\mathcal{F}}(S_p^n) $). We use Gaussian process regression with the Matern kernel to predict the value of the objective function as it is commonly used as a good surrogate model for BO \citep{DBLP:conf/nsdi/AlipourfardLCVY17}. A 95\% confidence interval is associated with $ \mathcal{F}(S_p) $, which is considered the most likely region for the value of $ \mathcal{F}(S_p) $.
	
	\item Second, BO typically requires only a limited number of trials to find high-quality solutions, resulting in low search overhead. To minimize the number of trials, it selects the next configuration $S_p$ by maximizing an acquisition function \citep{DBLP:conf/nsdi/AlipourfardLCVY17}. In FlowMoE, we adopt the Expected Improvement (EI) acquisition function to choose the $S_p$ that can maximize training speedup improvement compared to the current best result.
	
	\item Third, BO avoids the search process from falling into a local optimum by tuning the hyperparameter Expected Improvement (EI). Specifically, a smaller EI favors exploitation, i.e., collecting more points near the peak, while a larger EI tends to exploration, i.e., collecting more scattered points in the range \citep{DBLP:conf/nsdi/AlipourfardLCVY17}. In FlowMoE, we choose EI=0.1 to prefer $ S_p $  exploration, which is a commonly adopted value \citep{peng2019generic, zhang2023dear}. Meanwhile, the search space of $S_p$ in BO can be set as (0MB, the maximum value of the tensor to be communicated in each transformer block] and the required one initial sample value is randomly generated, which can effectively cover the optimal $S_p$ for different hardware clusters and MoE models, e.g., from 0MB-10MB in Fig. \ref{Fig. 5}.
\end{itemize}

\subsection{Importance Analysis of BO}
BO auto-tuning is an indispensable part of FlowMoE and is absolutely necessary. The performance of FlowMoE depends on the introduced BO process. Specifically, an oversized all-reduce tensor chunk will affect the prioritization of A2A communication tasks and may prevent them from being started in time. In contrast, an undersized all-reduce tensor chunk will result in excessive startup overhead. Therefore, neither an oversized nor undersized all-reduce tensor chunk can achieve the shortest per-iteration time, and there must exist a unique optimal solution that maximizes training speed. The BO optimizer balances these two competing effects by sampling and learning from actual iterations, efficiently finding an all-reduce tensor chunk size that maximizes overlap without incurring excessive overhead. This tuning is crucial because the optimal trade-off point varies across models, GPU interconnect topologies, and MoE configurations.

\subsection{Performance Evaluation of BO}
\begin{table}[H]
	\renewcommand{\arraystretch}{1.0}
	\caption{Comparison of average per-iteration time in milliseconds when using different approaches to tune the partitioning size $S_p$.}
	\label{Table BO}
	\centering
	
	\begin{tabular}{p{80pt} >{\centering\arraybackslash}p{40pt} >{\centering\arraybackslash}p{50pt} >{\centering\arraybackslash}p{130pt}}
		\hline
		\multirow{2}{*}{Model} & \multicolumn{3}{c}{Time (ms)} \\\cline{2-4}
		 & BO & Grid Search & Random Number Generation \\
		\hline
		GPT2-Tiny-MoE & 95.6 & 101.3 & 109.3 \\
		BERT-Large-MoE & 351.9 & 373.80 & 388.96 \\
		LLaMA2-MoE & 1124.0 & 1208.23 & 1250.09 \\
		DeepSeek-v2-S & 3205.3 & 3498.8 & 3902.75 \\
		\hline
	\end{tabular}
\end{table}
We compare the performance of BO with the other two methods including grid search and random number generation on tuning $S_p$. In our experiments, we set the other two methods to use the same search space as BO. For grid search, the search space is divided into 8 equal parts to construct 8 discrete sampling points. Similarly, the grid search utilized the first 80 iterations of the training process to determine $S_p$. Specifically, each iteration time corresponding to each sample point is obtained by averaging 10 iterations, and the sample point corresponding to the smallest per-iteration time is used as the value of $S_p$ for subsequent training. For random number generation, we randomly selected a number from the search space as the value of $S_p$ in each iteration. Table \ref{Table BO} shows the comparison of average iteration times when using different methods of tuning $S_p$ in FlowMoE on Cluster 1 with 16 GPUs. It can be observed that using BO to tune $S_p$ obtained the shortest per-iteration time among the three methods. Although grid search obtains better performance than random number generation, its limited number of sampling points makes it difficult to cover the optimal solution especially when the search space is large, and over-increasing the number of sampling points brings higher search overhead and longer search process. On the contrary, BO can obtain a near-optimal $S_p$ with very little overhead, and therefore we choose it.

\begin{table}[H]
	\renewcommand{\arraystretch}{1.0}
	\caption{Comparison of average per-iteration time in milliseconds when using FlowMoE with BO auto-tuning or different fixed partition sizes.}
	\label{Table Sp_sensitivity}
	\centering
	
	\begin{tabular}{p{80pt} >{\centering\arraybackslash}p{25pt} >{\centering\arraybackslash}p{40pt} >{\centering\arraybackslash}p{40pt} >{\centering\arraybackslash}p{40pt} >{\centering\arraybackslash}p{40pt} >{\centering\arraybackslash}p{40pt}}
		\hline
		\multirow{2}{*}{Model} & \multicolumn{6}{c}{Time (ms)} \\\cline{2-7}
		& BO & $S_p$=0.5MB & $S_p$=1MB & $S_p$=2MB & $S_p$=4MB & $S_p$=8MB\\
		\hline
		GPT2-Tiny-MoE & 95.6 & 130.1 & 115.9 & 104.7 & 109.7 & 122.3 \\
		BERT-Large-MoE & 351.9 & 388.9 & 378.6 & 362.2 & 386.3 & 395.1 \\
		LLaMA2-MoE & 1124.0 & 1213.8 & 1167.9 & 1185.9 & 1211.8 & 1240.5\\
		DeepSeek-v2-S & 3205.3 & 4438.5 & 3948.9 & 3654.7 & 3493.9 & 3740.2\\
		\hline
	\end{tabular}
\end{table}
\textbf{BO Auto-tuning v.s. Different Fixed Partition Sizes.}~In addition, we count the average per-iteration time when using FlowMoE with BO auto-tuning or different fixed partition sizes for training four MoE models (see Table \ref{Table 3} for their detailed configurations) in Cluster 1 with 16 GPUs to validate the sensitivity of FlowMoE to $S_p$. As shown in Table \ref{Table Sp_sensitivity}, different partition sizes greatly affect the training efficiency and BO auto-tuning is essential to maximize the performance of FlowMoE.

\begin{table}[H]
	\renewcommand{\arraystretch}{1.0}
	\caption{Comparison of average per-iteration time when training BERT-Large-MoE with different BO parameter configurations on Cluster 1, where the surrogate model uses Gaussian Process Regression (GPR) with different kernel functions.}
	\label{Table BO_hyperparameter_sensitivity}
	\centering
	
	\begin{tabular}{p{140pt} >{\centering\arraybackslash}p{120pt} >{\centering\arraybackslash}p{75pt}}
		\hline
		\multicolumn{2}{c}{BO Hyperparameter} & \multirow{2}{*}{Time (ms)} \\\cline{1-2}
		Acquisition Function & Surrogate Model & \\
		\hline
		Expected Improvement (EI=0.1) & GPR + Matern & 351.9 \\
		Expected Improvement (EI=0.05) & GPR + Matern & 358.9 \\
		Expected Improvement (EI=0.2) & GPR + Matern & 354.2\\
		Probability of Improvement & GPR + Matern & 355.1\\
		Lower Confidence Bound & GPR + Matern & 355.4\\
		Expected Improvement (EI=0.1) & GPR + RBF & 357.2\\
		Expected Improvement (EI=0.1) & GPR + Rational Quadratic & 360.2\\
		\hline
	\end{tabular}
\end{table}
\textbf{Sensitivity of BO hyperparameters.}~In FlowMoE, the objective function for per-iteration time regarding the all-reduce tensor chunk size is a single-peaked, smooth objective function with a fixed search space. For this objective function, the process of using BO to find the optimal solution is insensitive to the two hyperparameters (acquisition function and surrogate model) of BO, and BO always converges and approaches the optimal solution. Table \ref{Table BO_hyperparameter_sensitivity} shows the comparison results of average per-iteration time when training BERT-Large-MoE with different BO parameter configurations on Cluster 1. The results indicate that although the performance of FlowMoE is sensitive to the BO process, BO hyperparameters have a minor impact on it, and different BO configurations lead to similar iteration time.

\begin{table}[H]
	\renewcommand{\arraystretch}{1.0}
	\caption{Percentage of the computational overhead of BO to the training time of first 1000 iterations.}
	\label{Table BO_overhead}
	\centering
	
	\begin{tabular}{p{30pt} >{\centering\arraybackslash}p{70pt} >{\centering\arraybackslash}p{80pt} >{\centering\arraybackslash}p{70pt} >{\centering\arraybackslash}p{70pt}}
		\hline
		Model & GPT2-Tiny-MoE & BERT-Large-MoE & LLaMA2-MoE & DeepSeek-v2-S\\
		\hline
		Overhead & 3.22\% & 1.38\% & 0.43\% & 0.16\%\\
		\hline
	\end{tabular}
\end{table}
\textbf{Computational Overhead of BO.}~We also measure the percentage of the computational overhead of BO to the training time of first 1000 iterations when training four MoE models. The experimental results are illustrated in Table \ref{Table BO_overhead}. It can be observed that this overhead is negligible compared to the gains brought by BO (see Table \ref{Table 4}), and it will be smaller when training the model to convergence, which involves tens of thousands of iterations. Thus, the auto-tuning process of BO is lightweight and practical.

\section{Algorithm Description and Scalability Analysis}
\label{Algorithm Description}
Algorithm \ref{algorithm 1} describes the pipeline of $R$-degree computing and communication tasks during the training process. Line 1-4 initializes the necessary parameters and queues. Line 6-12 performs the feed-forward computing of each iteration according to Eqs. \ref{3} and \ref{4}. Meanwhile Line 13-21 performs the backward propagation of each iteration according to Eqs. \ref{5} and \ref{6}. Algorithm \ref{algorithm 2} shows the management in the communication pool. The PARTITION procedure is responsible for splitting the tensor of the MHA layer and gating function during the backward propagation process and placing it into the queue of the all-reduce communication tasks.The COMMPOOLMANAGER procedure performs the two types of communication tasks according to the defined priorities, where Line 9-11 prioritizes the execution of the A2A communication tasks and queues the data obtained after the communication, while Line 12-13 executes the communication tasks of the all-reduce tensor chunks when there are no A2A communication tasks. 

To analyze the scalability of FlowMoE's scheduling algorithm, we examine the computational complexity of each iteration of the MoE model in distributed training. We assume that each GPU processes the same number of tokens and do not consider the non-overlapped time of A2A communication and all-reduce communication in each iteration, then the time complexity of one iteration is $\mathcal{O}\left( L \times \left[ B N M^2 + B^2 N^2 M + B N M E + 2B N M H\right] \right)$, where $\mathcal{O}(B N M^2 + B^2 N^2 M)$ denotes the linear mapping of the Q,K,V and linear transformation matrix in the MHA layer and the attention score computation, $\mathcal{O}(B N M E)$ denotes the gate function computing with the structure of $M\times E$, and $\mathcal{O}\left( 2B N M H \right)$ denotes the expert computing on each GPU. Then, when the complexity of the MoE model scales up, the increase in the complexity of FlowMoE's scheduling algorithm is significantly smaller than the increase in the complexity of one iteration time. Meanwhile, when the number of GPUs $P$ in the cluster increases, the complexity of the scheduling algorithm of FlowMoE is not affected, but the complexity of one iteration time may increase due to the longer time of the communication tasks. In summary, the ratio of the time overhead of FlowMoE's scheduling algorithm to one iteration time will decrease when the model complexity and the number of GPUs increase. Therefore, FlowMoE's scheduling algorithm has good scalability.

\section{System Description}
\label{System Description}
From the closest to user level to the lowest level, distributed training frameworks and communication stacks usually include: User Script level, PyTorch frontend with high-level APIs, PyTorch Engine, message-level communication library. To ensure the generality of FlowMoE, we can not modify user scripts and framework engines heavily. However, due to the diversity of MPI (e.g., OpenMPI \citep{graham2006open}, Intel MPI \citep{si2013direct}) and network interfaces (e.g., Ethernet, Infiniband) in the communication library level, it is almost impossible to make one piece of code work in all communication libraries. Therefore, FlowMoE is deployed at the PyTorch API level, where its four modules is elaborated as follow:
\begin{itemize}
	\item Task Breakdown Manager is responsible for splitting the dataset of each batch size according to the degree $R$ and requesting different communication/computing subtasks based on the tensor transferred between different computing and communication tasks. Task Breakdown Manager has been improved based on Tutel.
	
	\item BO autotuner is responsible for searching the near-optimal partition size and guiding the partitioning of the all-reduce tensor.
	
	\item Communication Task Pool maintains queues of A2A communication tasks and all-reduce chunk communication tasks. On the one hand, it receives A2A communication tasks requested from the Task Breakdown Manager and queues them according to the request order. On the other hand, it splits the all-reduce tensor generated in the backward propagation and queues the all-reduce chunks.
	
	\item Pipeline Scheduling Manager submits the requested multiple computing tasks and communication tasks in the queue to the PyTorch Engine and the communication library according to the defined scheduling order and priority.
\end{itemize}

In addition, the backward propagation of each transformer block is not intuitively visible in PyTorch, and the gradients of the model parameters can only be accessed after the backward propagation is fully completed. To obtain the gradient tensors of the MHA layer and gating function for each transformer block in time during backward propagation, we use \textit{register\_full\_backward\_hook} to access the gradients and split the gradient tensors into queues based on the partition size $ S_p $. Meanwhile, we achieve the overlapping of computing and communication tasks by using multi-threading. Specifically, the main thread includes Task Breakdown Manager, BO autotuner and Pipeline Scheduling Manager and is responsible for scheduling all tasks. We also create one sub-thread in Communication Task Pool to schedule the communication tasks in the two queues based on the defined priorities, and a \textit{threading.Lock} is used to maintain security for all threads.

\section{Stress Tests in two Scaled-up MoE Models}
\label{Performance in scaled-up MoE models}
\begin{table}[H]
	\renewcommand{\arraystretch}{1.1}
	\fontsize{10pt}{\baselineskip}\selectfont
	\setlength{\tabcolsep}{3pt} 
	\caption{Comparison of average per-iteration time when training two scaled-up MoE models. S1, S2, and S3 are the speedups of FlowMoE over ScheMoE, Tutel and vanillaEP, respectively.}
	\label{Table Stress_Tests}
	\centering
	
	\begin{tabular}{>{\centering\arraybackslash}p{17pt} >{\centering\arraybackslash}p{74pt} >{\centering\arraybackslash}p{34pt} >{\centering\arraybackslash}p{25pt} >{\centering\arraybackslash}p{35pt} >{\centering\arraybackslash}p{36pt} >{\centering\arraybackslash}p{17pt} >{\centering\arraybackslash}p{17pt} >{\centering\arraybackslash}p{19pt}}
		\hline
		\multirow{2}{*}{\makecell{\# of\\GPUs}} & \multirow{2}{*}{Model} & \multicolumn{4}{c}{Time (ms)} & \multirow{2}{*}{$S_3$} & \multirow{2}{*}{$S_2$} & \multirow{2}{*}{$S_1$} \\\cline{3-6}
		& & vanillaEP & Tutel & ScheMoE & FlowMoE & & & \\
		\hline
		\multirow{2}{*}{4} & LLaMA2-MoE-L & 2405.1 & 1927.0 & 1806.1 & 1493.8 & 1.61$\times$ & 1.29$\times$ & 1.21$\times$ \\
		& DeepSeek-V2-M & 535.3 & 468.4 & 432.2 & 352.2 & 1.52$\times$ & 1.33$\times$ & 1.23$\times$ \\
		\hline
		\multirow{2}{*}{8} & LLaMA2-MoE-L & 2989.1 & 2493.9 & 2297.9 & 1833.8 & 1.63$\times$ & 1.36$\times$ & 1.25$\times$ \\
		& DeepSeek-V2-M & 944.6 & 773.4 & 723.6 & 552.4 & 1.71$\times$ & 1.40$\times$ & 1.31$\times$ \\
		\hline
		\multirow{2}{*}{16} & LLaMA2-MoE-L & OOM & OOM & OOM & OOM & / & / & / \\
		& DeepSeek-V2-M & 1254.6 & 956.9 & 893.4 & 708.8 & 1.77$\times$ & 1.35$\times$ & 1.26$\times$ \\
		\hline
	\end{tabular}
\end{table}
To stress-test FlowMoE's performance, we measured the training performance of different training frameworks on two scaled-up MoE models (LLaMA2-MoE-L and DeepSeek-V2-M), both of which are close to the memory upper limit of Cluster 1. Table \ref{Table Stress_Tests} shows the average per-iteration time using different frameworks on these two MoE models (FasterMoE is OOM in any cases). The results indicate that FlowMoE still achieved the best training performance among all baselines on larger models.

\section{Convergence Analysis and Experiments}
\label{Convergence Analysis}
We denote the samples processed in each pipeline as one microbatch, and the number of microbatches is equal to the pipelining degree $ R $.

Specifically, during the backpropagation of the $l$th transformer block in one iteration, the gradients from all $ AT_r^l $ and $ E_r^l $ ($ 1\le r\le R $) after backpropagation are accumulated and summed. Once the gradient from $ AT_1^l $ is also accumulated, the All-Reduce chunk communication task for the MHA and gate function in the $l$th transformer block is started. Similarly, when the gradient of $ E_1^l $ is accumulated, the expert parameters are updated. This effectively prevents the parameters of MHA, the gate function, and the expert from being updated early, thereby avoiding gradient staleness. Additionally, to ensure that the accumulated gradients are equivalent with and without pipelining, we scale the loss calculated for each microbatch by $ R $, i.e., $ \frac{loss^{(r)}}{R}(1\le r\le R) $, where $ loss^{(r)} $ is the loss calculated using the samples from the $r$th microbatch, and the detailed theoretical derivation is as follows:
	
The loss and gradient when performing backpropagation using the entire mini-batch are expressed as follows:
\begin{equation}
	\label{19}
	loss=\frac{1}{B}\sum_{i=1}^{B}\ell (x_i,y_i),\nabla L_{full}=\nabla\left ( \frac{1}{B}\sum_{i=1}^{B}\ell (x_i,y_i) \right ),
\end{equation}
which is currently used by all mainstream MoE training frameworks \citep{hwang2023tutel, rajbhandari2022deepspeed, shoeybi2019megatron}.

When dividing mini-batch into $R$ microbatches, the number of samples in each microbatch is $b=\frac{B}{R}$. The loss for each microbatch is:
\begin{equation}
	\label{20}
	loss^{(r)}=\frac{1}{b}\sum_{i=1}^{b} \ell (x_{r,i},y_{r,i}).
\end{equation}
Then, the scaled loss for each microbatch is:
\begin{equation}
	\label{21}
	\tilde{loss} ^{(r)}=\frac{1}{R}\cdot loss^{(r)}=\frac{1}{R}\cdot \frac{1}{b}\sum_{i=1}^{b} \ell (x_{r,i},y_{r,i})=\frac{1}{B}\sum_{i=1}^{b}\ell (x_{r,i},y_{r,i}).  
\end{equation}
The cumulative loss of all microbatches is:
\begin{equation}
	\label{22}
	\sum_{r=1}^{R}\tilde{loss} ^{(r)}= \sum_{r=1}^{R}\frac{1}{B}\sum_{i=1}^{b}\ell (x_{r,i},y_{r,i})= \frac{1}{B}\sum_{r=1}^{R}\sum_{i=1}^{b}\ell (x_{r,i},y_{r,i})=\frac{1}{B}\sum_{i=1}^{B}\ell (x_i,y_i).
\end{equation}
According to Eqs. \ref{18} and \ref{21}, we can obtain:
\begin{equation}
	\label{23}
	\nabla \left ( \sum_{r=1}^{R}\tilde{loss} ^{(r)}\right )= \nabla \left ( \frac{1}{B} \sum_{i=1}^{B}\ell (x_i,y_i)  \right )=\nabla L_{full}.
\end{equation}
This analysis shows that the gradient update strategy is numerically equivalent with and without pipelining. The only difference between them is the optimization of task scheduling order during execution, which does not compromise model convergence during training.

\begin{figure}[H]
	\captionsetup[subfigure]{justification=centering}
	\centering
	\begin{minipage}{\linewidth}
		\centering
		\begin{subfigure}{0.40\textwidth}
			\includegraphics[width=\linewidth]{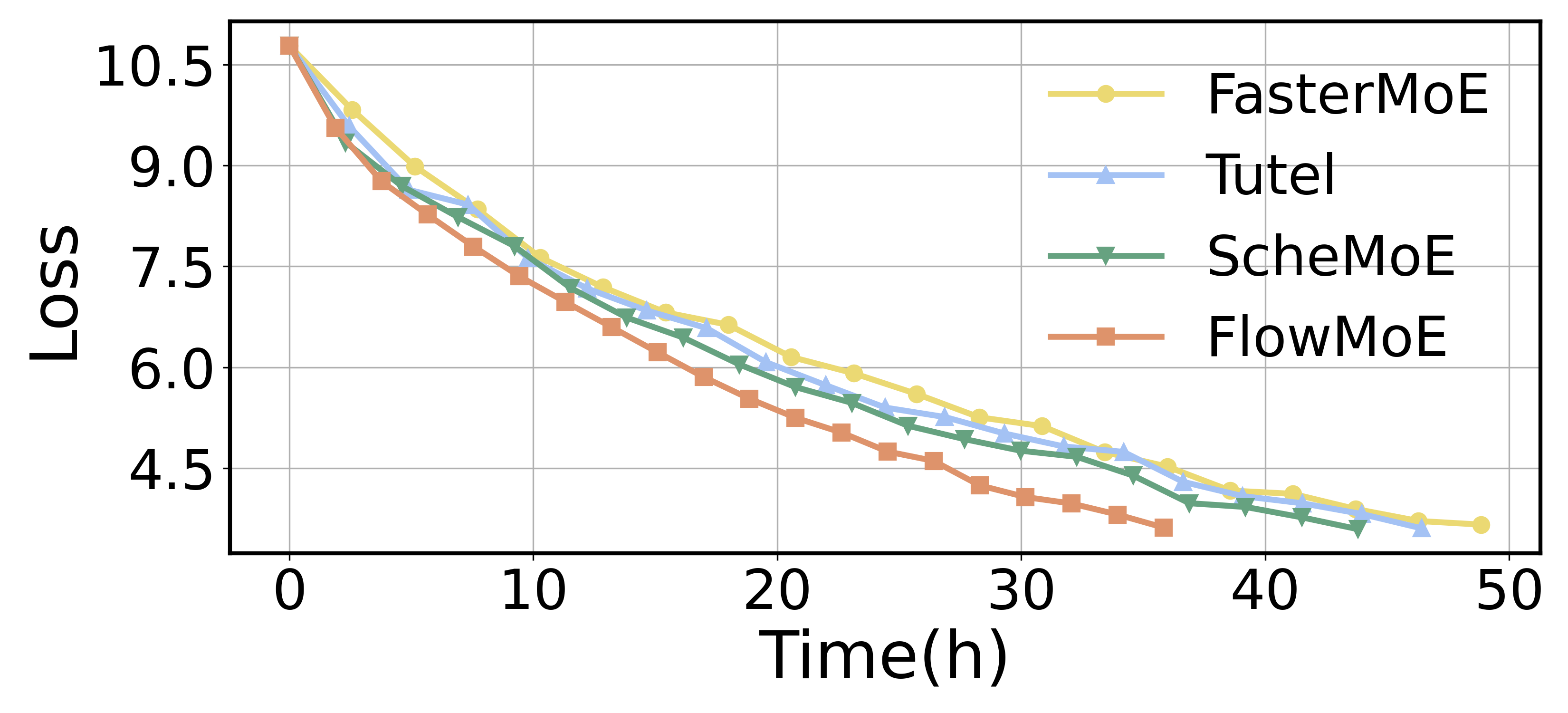}
			\caption{GPT2-Tiny-MoE}
			\label{Fig. convergence_a}
		\end{subfigure}
		\centering
		\begin{subfigure}{0.40\textwidth}
			\includegraphics[width=\linewidth]{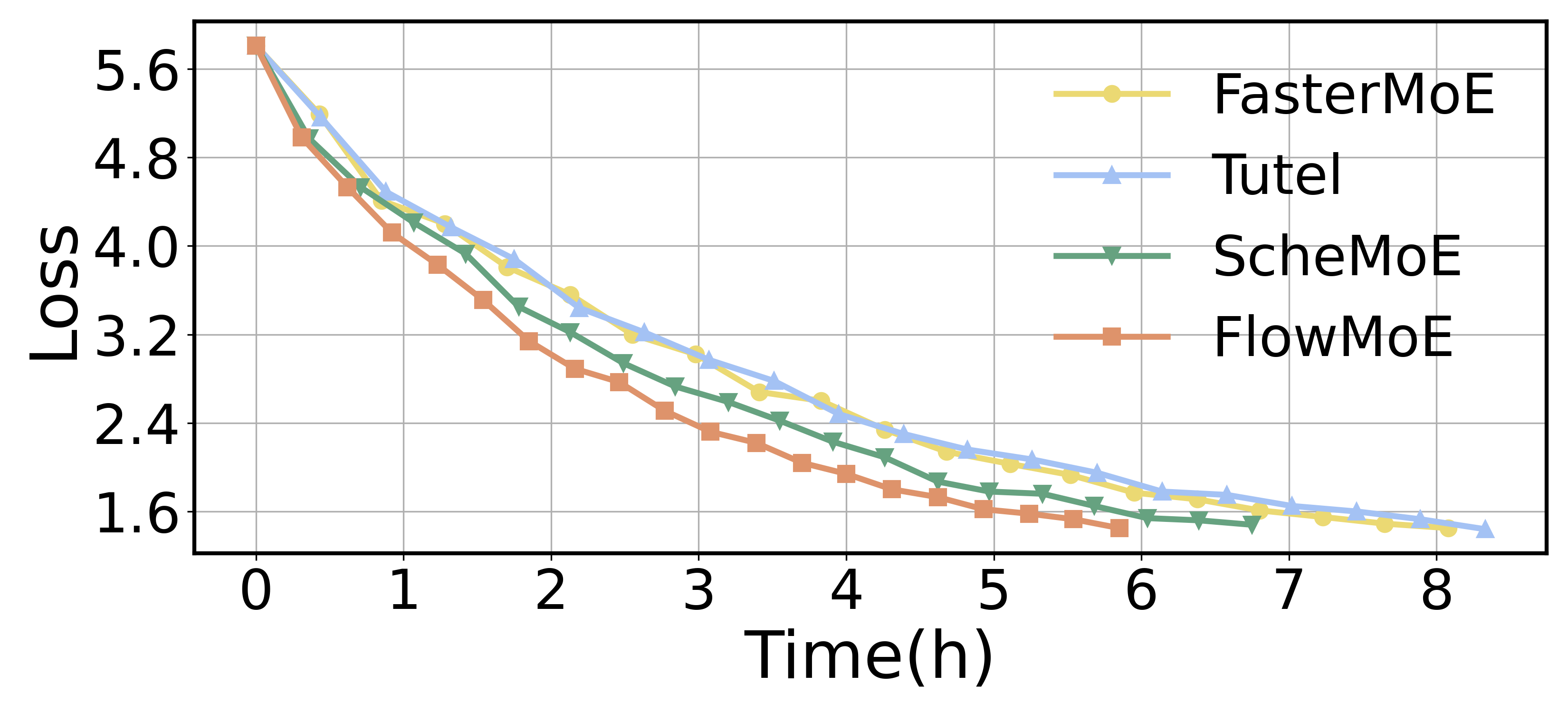}
			\caption{BERT-Large-MoE}
			\label{Fig. convergence_b}
		\end{subfigure}
		\caption{Loss with time for training GPT2-Tiny-MoE and BERT-Large-MoE.}
		\label{Fig. convergence}
	\end{minipage}
\end{figure}
Furthermore, we experimentally verify the convergence of FlowMoE by training GPT2-Tiny-MoE and BERT-Large-MoE on Cluster 1 with 16 GPUs. As shown in Fig. \ref{Fig. convergence}, FlowMoE reaches  the same loss as the baselines, and it takes much less time due to reduced per-iteration time.

\section{Performance Lower Bound Analysis}
\label{Performance Lower Bound Analysis}
We analyze the performance lower bound of FlowMoE from the following three cases:

(1) \textbf{The communication task time is much longer than the computing task time.} When the communication time is very long, the time of the All-to-All (A2A) communication task alone completely covers the computing task time. In this case, the All-Reduce (AR) chunk based priority scheduling mechanism will fail because the AR chunk cannot be inserted into the A2A communication task. The performance of FlowMoE will be the same as ScheMoE, Tutel, and FasterMoE, but better than vanillaEP due to the hidden computing task time.

(2) \textbf{The computing task time is much longer than the communication task time.} When the computing time is very long, all the communication task time can be covered by the computing task time. In this case, FlowMoE will outperform ScheMoE, Tutel, and FasterMoE due to the hidden AR task time, and better than vanillaEP because it further hides the A2A task time.

(3) \textbf{The communication task time is comparable to the computing task time.} In this case, FlowMoE outperforms all baselines because it maximizes the overlap of multi-type tasks.

In summary, in all cases, the performance of FlowMoE is greater than or equal to that of ScheMoE, Tutel, and FasterMoE, and is always better than that of vanillaEP.

\section{GPU SM Utilization}
\label{GPU SM Utilization}
\begin{table}[H]
	\renewcommand{\arraystretch}{1.1}
	\caption{Average GPU SM utilization with different microbatch sizes.}
	\label{Table SM_with_microbatch_size}
	\centering
	
	\begin{tabular}{p{40pt} >{\centering\arraybackslash}p{75pt} >{\centering\arraybackslash}p{10pt} >{\centering\arraybackslash}p{120pt}}
		\hline
		Name & Model & $R$ & Average GPU SM Utilization \\
		\hline
		FlowMoE & GPT2-Tiny-MoE & 2 & 72.63\%\\
		FlowMoE & GPT2-Tiny-MoE & 4 & 48.43\%\\
		vanillaEP & GPT2-Tiny-MoE & / & 87.09\%\\
		FlowMoE & BERT-Large-MoE & 2 & 87.84\%\\
		FlowMoE & BERT-Large-MoE & 4 & 78.16\%\\
		vanillaEP & BERT-Large-MoE & / & 88.90\%\\
		FlowMoE & LLaMA2-MoE & 2 & 89.16\%\\
		FlowMoE & LLaMA2-MoE & 4 & 88.19\%\\
		vanillaEP & LLaMA2-MoE & / & 89.49\%\\
		FlowMoE & DeepSeek-V2-S & 2 & 89.27\%\\
		FlowMoE & DeepSeek-V2-S & 4 & 88.85\%\\
		vanillaEP & DeepSeek-V2-S & / & 90.77\%\\
		\hline
	\end{tabular}
\end{table}
First, we use the CUPTI tool to measure GPU SM utilization on Cluster 1 with 16 GPUs under different microbatch sizes. Specifically, we adjust the microbatch size by changing the pipelining degree $R$ 
(a larger $R$ results in a smaller microbatch). VanillaEP represents the original mini-batch without pipelining. The results are shown in Table \ref{Table SM_with_microbatch_size}. We observe that smaller microbatch sizes may result in lower GPU SM utilization (e.g., when training GPT2-Tiny-MoE with $R=4$ using FlowMoE), but in most cases, GPU SM utilization is nearly identical to that without pipelining (e.g., when training BERT-Large-MoE, LLaMA2-MoE and DeepSeek-V2-S using FlowMoE). This is because the actual tensors involved in GPU computation remain sufficiently large for larger MoE models, and GPU SM resources are still utilized efficiently. In other words, the microbatches introduced by FlowMoE's pipelining do not waste GPU SM resources for large MoE models.

Second, we measure GPU SM utilization with different batch sizes when training different MoE models using FlowMoE on Cluster 1 with 16 GPUs. As illustrated in Table \ref{Table SM_with_batch_size}, smaller batch sizes are more likely to lead to lower GPU SM utilization, e.g., when training GPT2-Tiny-MoE or BERT-Large-MoE with a batch size of 2. Moreover, GPU SM utilization remained nearly unchanged when training LLaMA2-MoE or DeepSeek-V2-S. This is because these two MoE models have a larger number of parameters, which keeps the tensor dimensions involved in GPU computation sufficiently large, ensuring that SM resources are still efficiently utilized even with a lower batch size.
\begin{table}[H]
	\renewcommand{\arraystretch}{1.1}
	\caption{Average GPU SM utilization with different batch sizes when using FlowMoE.}
	\label{Table SM_with_batch_size}
	\centering
	
	\begin{tabular}{p{75pt} >{\centering\arraybackslash}p{50pt} >{\centering\arraybackslash}p{120pt}}
		\hline
		Model & Batch Size & Average GPU SM Utilization \\
		\hline
		GPT2-Tiny-MoE & 4 & 72.63\%\\
		GPT2-Tiny-MoE & 2 & 36.62\%\\
		BERT-Large-MoE & 4 & 87.84\%\\
		BERT-Large-MoE & 2 & 61.48\%\\
		LLaMA2-MoE & 4 & 89.16\%\\
		LLaMA2-MoE & 2 & 88.45\%\\
		DeepSeek-V2-S & 4 & 89.27\%\\
		DeepSeek-V2-S & 2 & 89.06\%\\
		\hline
	\end{tabular}
\end{table}
\begin{table}[H]
	\renewcommand{\arraystretch}{1.1}
	\caption{Parameter Configurations for BERT-Large-MoE-w}
	\label{Table BERT-Large-MoE-w_Configurations}
	\centering
	
	\begin{tabular}{p{90pt} >{\centering\arraybackslash}p{57pt} >{\centering\arraybackslash}p{30pt} >{\centering\arraybackslash}p{51pt} >{\centering\arraybackslash}p{2pt} >{\centering\arraybackslash}p{0.5pt} >{\centering\arraybackslash}p{9pt} >{\centering\arraybackslash}p{9pt} >{\centering\arraybackslash}p{9pt} >{\centering\arraybackslash}p{5pt} >{\centering\arraybackslash}p{0.5pt}}
		\hline
		\multirow{2}{*}{MoE Model} & \multirow{2}{*}{\makecell{\# Params\\(MHA+Gating)}} & \multirow{2}{*}{\makecell{\# Params\\(Experts)}} & \multirow{2}{*}{Dataset} & \multicolumn{7}{c}{Configurations} \\\cline{5-11}
		& & & & L & B & N & M & H & E/P & k\\
		\hline
		BERT-Large-MoE-w & 25.2M & 3325.9M & wikitext-103 & 24 & 4 & 512 & 512 & 1024 & 8 & 1\\
		\hline
	\end{tabular}
\end{table}
\begin{table}[H]
	\renewcommand{\arraystretch}{1.1}
	\caption{The maximum and minimum GPU SM utilization for a large number of experts with different numbers of activated experts when using FlowMoE in Cluster 1 with 16 GPUs..}
	\label{Table maximum and minimum GPU SM utilization}
	\centering
	
	\begin{tabular}{p{86pt} >{\centering\arraybackslash}p{7pt} >{\centering\arraybackslash}p{125pt} >{\centering\arraybackslash}p{125pt}}
		\hline
		Model & $f$ & Maximum GPU SM Utilization & Minimum GPU SM Utilization \\
		\hline
		BERT-Large-MoE-w & 1.0 & 89.20\% & 87.81\%\\
		BERT-Large-MoE-w & 4.0 & 89.72\% & 50.65\%\\
		BERT-Large-MoE-w & 8.0 & 90.30\% & 31.60\%\\
		BERT-Large-MoE-w & 16.0 & 90.68\% & 19.41\%\\
		\hline
	\end{tabular}
\end{table}
Third, we also evaluate the impact of the number of activated experts on GPU utilization when using FlowMoE on Cluster 1 with 16 GPUs for a large number of experts. Specifically, we construct BERT-Large-MoE-w by increasing the number of experts per GPU from 2 to 8 and adjusting the number of activated experts by modifying the capacity factor $f$, where a larger $f$ indicates more uneven token routing by the gate function and fewer activated experts since most tokens are routed to popular experts. The detailed configurations of BERT-Large-MoE-w are shown in Table \ref{Table BERT-Large-MoE-w_Configurations}. We use the CUPTI tool and report the maximum and minimum GPU SM utilization in Table \ref{Table maximum and minimum GPU SM utilization}. The results demonstrate that, with a large number of experts, the fewer the number of activated experts, the more unbalanced the computation load on the GPU and the greater the difference in SM utilization between GPUs.

\section{Robustness Analysis}
\label{Robustness Analysis}
We discuss the robustness of FlowMoE to heterogeneous clusters, dynamic hardware environments, and node dropouts.
\subsection{Robustness for Heterogeneous Clusters}
\begin{table}[H]
	\renewcommand{\arraystretch}{1.0}
	\fontsize{10pt}{\baselineskip}\selectfont
	\setlength{\tabcolsep}{3pt} 
	\caption{Comparison of average iteration time in milliseconds when the GPUs have different computing power. S1, S2, S3 and S4 are the speedups of FlowMoE over vanillaEP, FasterMoE, Tutel and ScheMoE, respectively.}
	\label{Table 8}
	\centering
	
	\begin{tabular}{p{78pt} >{\centering\arraybackslash}p{36pt} >{\centering\arraybackslash}p{40pt} >{\centering\arraybackslash}p{25pt} >{\centering\arraybackslash}p{38pt} >{\centering\arraybackslash}p{36pt} >{\centering\arraybackslash}p{19pt} >{\centering\arraybackslash}p{19pt} >{\centering\arraybackslash}p{19pt} >{\centering\arraybackslash}p{21pt}}
		\hline
		\multirow{2}{*}{Model} & \multicolumn{5}{c}{Time (ms)} & \multirow{2}{*}{$S_4$} & \multirow{2}{*}{$S_3$} & \multirow{2}{*}{$S_2$} & \multirow{2}{*}{$S_1$} \\\cline{2-6}
		& vanillaEP & FasterMoE & Tutel & ScheMoE & FlowMoE & & & & \\
		\hline
		GPT2-Tiny-MoE  & 235.8 & 201.6 & 189.1 & 178.2 & 153.3 & 1.54$\times$ & 1.32$\times$ & 1.23$\times$ & 1.16$\times$ \\
		BERT-Large-MoE & 657.7 & 608.7 & 590.8 & 500.6 & 449.2 & 1.46$\times$ & 1.36$\times$ & 1.31$\times$ & 1.11$\times$ \\
		LLaMA2-MoE     & 2439.1 & 2152.2 & 1849.2 & 1707.4 & 1468.3 & 1.66$\times$ & 1.47$\times$ & 1.26$\times$ & 1.15$\times$ \\
		DeepSeek-V2-S  & 7233.7 & 5495.1 & 5323.0 & 4958.3 & 4142.4 & 1.74$\times$ & 1.33$\times$ & 1.28$\times$ & 1.20$\times$ \\
		\hline
	\end{tabular}
\end{table}
We conduct a detailed analysis on how FlowMoE adapts to heterogeneous GPU clusters. (1) \textbf{For clusters with mixed GPU types}, since all-reduce and A2A tasks can only begin once the slowest GPU completes its corresponding computing task (other GPUs will be idle in waiting time after completing their computing tasks), the task timeline for each iteration of FlowMoE and other baselines is determined by the slowest GPU. In this case, FlowMoE can maximize the computing/communication overlap of the slowest GPU, and its performance still outperforms the baselines. To verify this conclusion, we construct a heterogeneous cluster with 16 GPUs on Cluster 1 with different GPU types. We add delay to each iteration on 8 GPUs in one node of Cluster 1 to simulate the computing power reduction. Specifically, we use \textit{register\_forward\_hook} and \textit{register\_hook} in PyTorch to access the forward computing and backward propagation of each layer. Meanwhile, we synchronize the tensor of each layer with \textit{torch.cuda.synchronize} after it completes the forward computing or gradient computing. Then, we obatin the forward computing or backward propagation time of each layer according to the interval between two tensor synchronizations and inject equivalent delays during both passes, which effectively simulates halving the computing power of the GPUs. Table \ref{Table 8} demonstrates that FlowMoE still obtains the shortest end-to-end training time when the GPUs have different computing power. (2) \textbf{For clusters with bandwidth asymmetry}, since AR and A2A tasks are collective communications, all GPUs start and finish communication tasks simultaneously regardless of whether their bandwidths are the same. Therefore, the task scheduling timeline is the same for all GPUs. In this case, FlowMoE still achieves better performance than the baseline due to its efficient pipeline solution.

\subsection{Robustness for Dynamic Hardware Environments}
In real large-scale training clusters, the hardware environments including network bandwidth and GPU computing power may change dynamically. To address it, we design the re-Bayesian tuning mechanism for FlowMoE and BO will be automatically re-executed when the hardware environment changes. Specifically, we define a re-execution threshold $\delta$. We denote the iteration time corresponding to the near-optimal $S_p$ predicted by the last BO as $\hat{\mathcal{F}}(S_p^{best})$. If the change of the current iteration time $T$ compared to $\hat{\mathcal{F}}(S_p^{best})$ is more than $\delta$, i.e.,
\begin{equation}
	\label{18}
	\frac{\left | T-\hat{\mathcal{F}}(S_p^{best}) \right |}{\hat{\mathcal{F}}(S_p^{best})}> \delta,
\end{equation} 
the BO will be re-executed to find a new $S_p$.

\subsection{Robustness for Node Dropouts}
In FlowMoE, to improve the fault tolerance of the system to handle node dropouts, we adopt the following approach:

First, a replica of each expert parameter is stored separately on two different nodes to ensure that when a node fails, its corresponding expert parameter can still continue to be served by the backup replica. To ensure the parameter consistency between the replicas, we set to synchronize the updating of expert parameters every fixed number of iteration steps (e.g., every 1000 steps) during the training process.

Second, FlowMoE periodically calls \textit{torch.distributed.barrier()} during the training process with a reasonable timeout (e.g., 5 minutes). When a synchronization operation times out and throws an exception, we determine that a node has failed, and obtain the rank of the failed node from the exception information. Then, FlowMoE will execute the following recovery process:
\begin{itemize}
	\item[(1)] Based on the node replica information, the gating function routing table is updated on all surviving nodes, and the requests originally assigned to the faulty node are remapped to its corresponding backup node;
	
	\item[(2)] The remaining surviving nodes are adapted to the new set of nodes and subsequently trained by destroying the current communication group (\textit{dist.destroy\_process\_group()}) and reinitializing a new distributed communication group (\textit{dist.init\_process\_group()});
\end{itemize}

Therefore, FlowMoE is able to effectively handle the risks from node dropouts in large-scale distributed training, and improves the stability and reliability of the overall training.

\end{document}